\begin{document}
\title{Joint Energy and Spectrum Cooperation for Cellular Communication Systems}

\author{Yinghao Guo, Jie Xu, Lingjie Duan, and Rui Zhang
\thanks{ Part of this paper has been presented in the IEEE International Conference on Communications (ICC), Sydney, Australia, 10-14 June, 2014.}
\thanks{Y. Guo and J. Xu are with the Department of Electrical and Computer Engineering, National University of Singapore (e-mail: yinghao.guo@nus.edu.sg, elexjie@nus.edu.sg).}
\thanks{L. Duan is with the Engineering Systems and Design Pillar, Singapore University of Technology and Design (e-mail: lingjie\_duan@sutd.edu.sg).}
\thanks{R. Zhang is with the Department of Electrical and Computer Engineering, National University of Singapore (e-mail:elezhang@nus.edu.sg). He is also with the Institute for Infocomm Research, A*STAR, Singapore.}
\vspace{-2em}}

\maketitle

\IEEEpeerreviewmaketitle
\setlength{\baselineskip}{1\baselineskip}
\newtheorem{definition}{\underline{Definition}}[section]
\newtheorem{fact}{Fact}
\newtheorem{assumption}{Assumption}
\newtheorem{theorem}{\underline{Theorem}}[section]
\newtheorem{lemma}{\underline{Lemma}}[section]
\newtheorem{corollary}{\underline{Corollary}}[section]
\newtheorem{proposition}{\underline{Proposition}}[section]
\newtheorem{example}{\underline{Example}}[section]
\newtheorem{remark}{\underline{Remark}}[section]
\newtheorem{algorithm}{\underline{Algorithm}}[section]
\newtheorem{observation}{\underline{Observation}}[section]
\newcommand{\mv}[1]{\mbox{\boldmath{$ #1 $}}}
\newcommand{\cred}{\color{red}}
\begin{abstract}
Powered by renewable energy sources, cellular communication systems usually have different wireless traffic loads and available resources over time. To match their traffics, it is beneficial for two neighboring systems to cooperate in resource sharing when one is excessive in one resource (e.g., spectrum), while the other is sufficient in another (e.g., energy). In this paper, we propose a joint energy and spectrum cooperation scheme between different cellular systems to reduce their operational costs. When the two systems are fully cooperative in nature (e.g., belonging to the same entity), we formulate the cooperation problem as a convex optimization problem to minimize their weighted sum cost and obtain the optimal solution in closed form. We also study another partially cooperative scenario where the two systems have their own interests. We show that the two systems seek for partial cooperation as long as they find inter-system complementarity between the energy and spectrum resources. Under the partial cooperation conditions, we propose a distributed algorithm for the two systems to gradually and simultaneously reduce their costs from the non-cooperative benchmark to the Pareto optimum. This distributed algorithm also has proportional fair cost reduction by reducing each system's cost proportionally over iterations. Finally, we provide numerical results to validate the convergence of the distributed algorithm to the Pareto optimality and compare the centralized and distributed  cost reduction approaches for fully and partially cooperative scenarios.
\end{abstract}

\begin{keywords}
Energy harvesting, energy and spectrum cooperation, convex optimization, distributed algorithm.
\end{keywords}

\section{Introduction}

With the exponential increases of the wireless subscribers and data traffic in recent years, there has also been a tremendous increase in  the energy consumption of the cellular systems and energy cost constitutes a significant part of wireless system's operational cost \cite{Hasan2011}. To save operational costs, more and more cellular operators are considering to power their systems  with  renewable energy supply, such as solar and wind sources. For instance, Huawei has adopted  renewable energy solution  at cellular base stations (BSs) in Bangladesh \cite{Huawei}. Nevertheless, unlike the conventional energy from grid, renewable energy (e.g., harvested through solar panel or wind turbine) is intermittent in nature and can have different availabilities over time and space. Traditional methods like energy storage with the use of capacity-limited and expensive battery are far from enough for one system to manage the fluctuations.  To help mitigate such uncertain renewable energy fluctuations and  shortage, {\it energy cooperation} by sharing one system's excessive energy to the other has been proposed for cellular networks (see \cite{Chia2013,Xu2013}). However, one key problem that remains unaddressed yet to implement this cooperation is how to motivate one system to share its energy to the other system with some benefits in return (e.g., collecting some other resource from the other system).

Besides energy, spectrum is another important resource for the operation of a cellular system and the two resources can complement each other. For example, to match peak-hour wireless traffics with limited spectrum, one system can increase energy consumption for transmission at a high operational cost \cite{Hasan2011}.  As it is unlikely that the two neighboring systems face spectrum shortage at the same time, it is helpful for them to share spectrum. Note that the similar idea of spectrum cooperation can be found in the context of cognitive radio networks \cite{Goldsmith2009}. However, like energy cooperation, spectrum cooperation here also faces the  problem of  how to motivate one system to share spectrum with the other.\footnote{It is possible for us to consider one resource's cooperation over time due to two systems' independent traffic variations. Yet, such one-resource cooperation scheme is not as widely used or efficient as the  two-resource cooperation scheme.}

To the best of our knowledge, this paper is the first attempt to study the joint energy and spectrum cooperation between different cellular systems powered by both renewable and conventional energy sources. The main contributions are summarized as follows:
\begin{itemize}
  \item \emph{Joint energy and spectrum cooperation scheme:} In Sections \ref{sec:system_model} and \ref{sec:problem_formulation}, we propose a joint energy and spectrum cooperation scheme between different cellular systems.  We provide a practical formulation of the renewable energy availability, inefficient energy and spectrum cooperation and the conventional and renewable energy costs in two systems' operational costs.

  \item  \emph{Centralized algorithm for full cooperation:} In Section \ref{sec:fullcoop}, we first consider the case of {\it full cooperation}, where  two systems belong to the same entity. We formulate the full energy and spectrum cooperation problem as a convex optimization problem, which minimizes the weighted sum cost of the two systems. We give the optimal solution to this problem in closed form. Our results show  that it is possible in this scenario, that one system shares both the spectrum and energy to the other system.

  \item \emph{Distributed algorithm for partial cooperation:} In Section \ref{sec:distributed}, we further study the case of {\it partial cooperation}, where the two systems belong to different entities and have their own interests.  We analytically characterize the partial cooperation conditions for the two systems to exchange the two resources. Under these conditions, we then propose a distributed algorithm for the two systems to gradually and simultaneously reduce their costs from the non-cooperative benchmark to the Pareto optimum. The algorithm also takes fairness into consideration, by reducing each system's cost proportionally in each iteration.

  \item \emph{Performance Evaluation:} In Section \ref{sec:numerical_examples},  we provide numerical results to validate the convergence of the distributed algorithm to the Pareto optimum and show a significant cost reduction of our proposed centralized and distributed approaches for fully and partially cooperative systems.
  \end{itemize}

In the literature,  there are some recent works studying energy cooperation in wireless systems (e.g., \cite{Chia2013,Xu2013,Gurakan2013}).  \cite{Chia2013} first considered the energy cooperation in a two-BS cellular network to minimize the total energy drawn from conventional grid subjected to certain requirements. Both off-line and on-line algorithms for the cases of unavailable and available future energy information were proposed. In \cite{Xu2013}, the authors proposed a joint communication and energy cooperation approach in coordinated multiple-point (CoMP) cellular systems powered by energy harvesting. They maximized the downlink sum-rate by jointly optimizing energy sharing and zero-forcing precoding. Nevertheless, both works \cite{Chia2013,Xu2013} only considered the cooperation within one single cellular system instead of inter-system cooperation. Another work worth mentioning is \cite{Gurakan2013}, which studies the wireless energy cooperation in different setups of wireless systems such as the one-way and two-way relay channels. However, different from our paper, which realizes the energy cooperation via wired transmission,  the energy sharing in \cite{Gurakan2013} is enabled by wireless power transfer with limited energy sharing efficiency and cooperation can only happen in one direction.

The idea of spectrum cooperation in this paper is similar to the {\it cooperative spectrum sharing} in the cognitive radio network literature (e.g.,  \cite{Simeone2008,Duan2014,Lin2011}), where secondary users (SUs) cooperate with primary users (PUs) to co-use PUs' spectrum. In order to create incentives for sharing, there are basically two approaches: {\it resource-exchange} \cite{Simeone2008,Duan2014} and {\it money-exchange} \cite{Lin2011}. For the resource-exchange approach, SUs relay traffics for PUs in exchange for dedicated spectrum resources for SUs' own communications \cite{Simeone2008,Duan2014}. Specifically, in \cite{Simeone2008}, the problem is formulated as a Stackelberg game, where the PU attempts to maximize its quality of service (QoS), while the SUs compete among themselves for transmission within the shared spectrum from the PU. In \cite{Duan2014}, the PU-SU interactions under incomplete information are modelled as a labor market using contract theory, in which the optimal contracts are designed. For the money-exchange approach, PU sells its idle spectrum to SUs. The authors in \cite{Lin2011} model the spectrum trading process as a monopoly market and accordingly design a monopolist-dominated quality price contract, where the necessary and sufficient conditions for the optimal contract are derived.

 It is worth noting that there has been another line of research on improving the energy efficiency in wireless networks by offloading traffic across different transmitters and/or systems \cite{Han2013,Yaacoub2013,Niu2010}. \cite{Han2013} studied a cognitive radio network, where the PU reduces its energy consumption by offloading part of their traffic to the secondary user (SU), while in return the PU shares its licensed spectrum bands to the SU. \cite{Yaacoub2013} and \cite{Niu2010} considered a single cellular system, in which some BSs with light traffic load can offload its traffic to the neighboring BSs and then turn off for saving energy. Although these schemes can be viewed as another approach to realize the spectrum and energy cooperation, they are different from our solution with direct  joint energy and spectrum sharing. In these works, there is no direct energy transfer between systems and the systems needs to be significantly changed in order to realize the proposed protocol.

Compared to the above existing works, the novelty of this paper is twofold. First, we provide a comprehensive study on the joint energy and spectrum cooperation by taking the uncertainty of  renewable energy  and the relationship between the two resources into account. Second, we consider the  conflict of interests between systems and propose both centralized and distributed algorithms for the cases of fully and partially cooperative systems.

\section{System Model}\label{sec:system_model}

We consider two neighbouring cellular systems that operate over different frequency bands. The two systems can either belong to the same entity (e.g., their associated operators are merged as a single party like Sprint and T-Mobile in some states of  US \cite{Sprint2013}) or relate to different entities. For the purpose of initial investigation, as shown in Fig.  \ref{sys_diagram}, we focus our study on the downlink transmission of two (partially) overlapping cells each belonging to one cellular system.\footnote{Our results can be extended to the multi-cell setting for each system by properly pairing the BSs in different systems.} In each cell $i\in\{1,2\}$, there is a single-antenna BS serving $K_i$ single-antenna mobile terminals (MTs). The sets of MTs associated with the two BSs are denoted by $\mathcal{K}_1$ and $\mathcal{K}_2$, respectively, with $|\mathcal{K}_1| = K_1$ and $|\mathcal{K}_2| = K_2$. We consider that the two BSs purchase energy from both conventional grid and their dedicated local renewable utility firms. For example, as shown in Fig. \ref{sys_diagram},  the local utility firm with solar source connects to BS 1 via a direct current (DC)/DC converter, the other one with wind source connects to BS 2 through an alternating current (AC)/DC converter, and power grid connects to both BSs by using AC/DC converters.  By combining energy from the two different supplies, BS 1 and BS 2 can operate on their respective DC buses. Here, different  power electronic circuits (AC/DC converter, DC/DC converter, etc.), which connect the BSs, renewable utility firms and the grids, are based on the types of power line connections (AC or DC buses) and the properties of different nodes (e.g., the BS often runs on a DC bus \cite{Huawei}).

\begin{figure*}[t]
  \centering
  \includegraphics[width=15cm]{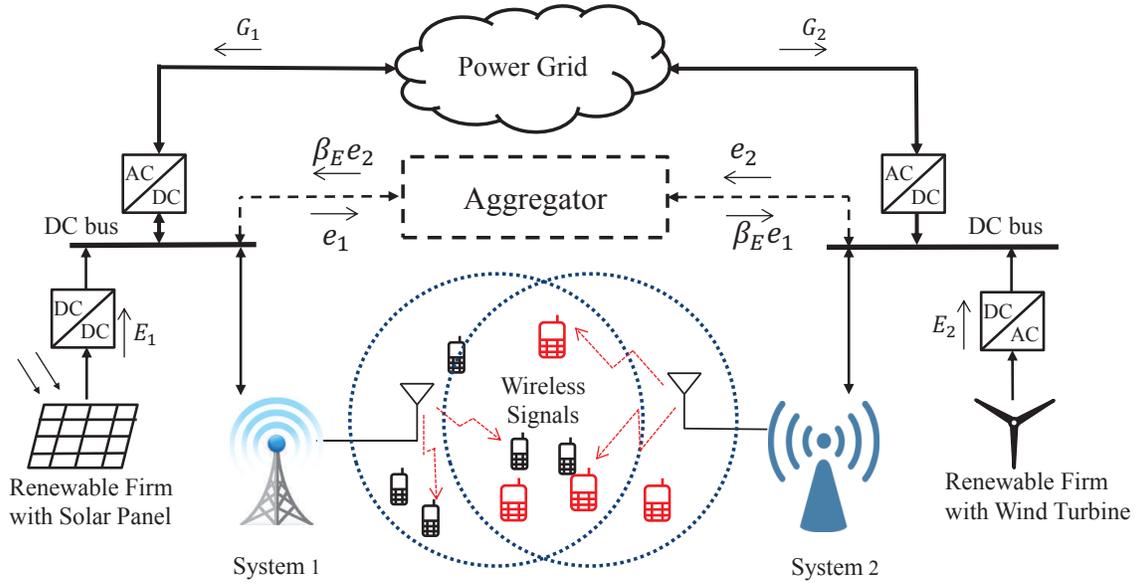}\\
  \caption{Two neighbouring cellular systems powered by power grid and renewable energy with joint energy and spectrum cooperation.}\label{sys_diagram}
\end{figure*}

We consider a time-slotted model in this paper, where the energy harvesting rate remains constant in each slot and may change from one slot to another. In practice, the harvested energy from solar and wind   remains constant over a window of seconds and we choose our time slot of the same duration.\footnote{Without loss of generality, we can further normalize the duration of each slot to be a unit of time so that the terms ``power'' and ``energy'' can be used interchangeably.} We further assume that a BS's operation in one time slot is based on its observation of the energy availability, channel conditions, traffic loads, etc., and is independent from its operation in another time slot. This is reasonable as current energy storage devices are expensive and often capacity-limited compared to power consumption of the BSs and many cellular systems do not rely on storage for dynamic energy management.\footnote{Note that the existing storage devices in the wireless systems today are generally  utilized for backup in case of the power supply outage, instead of for dynamic energy management with frequent power charing and discharging. On the other hand, the energy storage devices able to charge and discharge on the small time scales are  capacity-limited compared to power consumption of the BSs and are also expensive.  Hence, in this paper, we assume that the storage is not used at the BS, provided that the power supply from the grid is reliable.}
 Therefore, we can analyze the two systems' cooperation problem in each time slot individually. In the following, we first introduce the operation details of two systems' energy and spectrum cooperation, and then  propose the downlink transmission model for both systems.

\subsection{Energy Cooperation Model}
Recall that each BS can purchase energy from both conventional grid and renewable utility firms. The two different  types of energy supplies are characterized as follows.
\begin{itemize}
  \item {\it Conventional energy from the power grid}:  Let the energy drawn by BS $i\in\{1,2\}$ from the grid be denoted by $G_i \ge 0$. Since the practical energy demand from an individual BS is relatively small compared to the whole demand and production of the power grid network, the available energy from grid is assumed to be infinite for BS $i$. We denote $\alpha^{G}_i> 0$ as the price per unit energy purchased from grid by BS $i$. Accordingly, BS $i$'s payment to obtain energy  from grid  is $\alpha^{G}_iG_i$.
  \item {\it Renewable energy from the renewable utility firm}:  Let the energy purchased by BS $i\in\{1,2\}$ from the dedicated local renewable utility firms be denoted as $E_i> 0$.  Different from conventional energy from the grid, the local renewable energy firm is capacity-limited and subject to uncertain power supply due to environmental changes. Therefore, BS $i$ cannot purchase more than $\bar{E}_i$, which is the electricity production of the utility firm produces in the corresponding  time slot. That is,
        \begin{align}\label{renewable}
        E_i \le \bar{E}_i, i \in \{1,2\}.
        \end{align}
      Furthermore, we denote $\alpha^{E}_i > 0$ as the price of renewable energy at BS $i$ and BS $i$'s payment to obtain energy  from renewable utility firm  is $\alpha^{E}_iE_i$.
  \end{itemize}
By combing the conventional and renewable energy costs, the total cost at BS $i$ to obtain energy $G_i+E_i$ is thus denoted as
\begin{align}\label{cost}
 C_i=\alpha^{E}_iE_i+\alpha^{G}_iG_{i},~i \in \{1,2\}.
\end{align}
The price to obtain a unit of renewable energy is lower than that of conventional energy (i.e., $\alpha_i^E<\alpha_i^G, \forall i\in\{1,2\}$).   This can be valid in reality thanks to governmental subsidy, potential environmental cost of conventional energy, and the high cost of delivering conventional energy to remote areas, etc.

 Next, we consider the energy cooperation between the two systems.    Let the transferred energy from BS 1 to BS 2 be denoted as $e_1\geq0$ and that from BS 2 to BS 1 as $e_2\geq0$. Practically, the energy cooperation between two systems can be implemented by connecting  the two BSs  to a common aggregator as shown in Fig. \ref{sys_diagram}.\footnote{{ Aggregator is a virtual entity in the emerging smart gird that aggregates and controls the generation and  demands at distributed end users (e.g., BSs in cellular systems) \cite{Gkatzikis2013}. In order to manage these distributed loads more efficiently, the aggregator allows the end users to either draw or inject energy from/to it under different demand/supply conditions, by leveraging the two-way information and energy flows supported by the emerging smart grid \cite{Fang2012}. By utilizing the two-way energy transfer between the end users and the aggregator, the energy sharing between the BSs  can be enabled. With the advancement of smart grid technologies, we envision that the two-way energy transfer here would not induce additional cost.}}
 When BS $i$ wants to share energy with BS $\bar{\imath}$, where  $\bar{\imath}\in\{1,2\}\setminus\{i\}$, BS $i$ first notifies BS $\bar{\imath}$ the transmitted energy amount $e_i$. Then, at the appointed time, BS $i$ injects $e_i$ amount of energy to the aggregator and BS $\bar{\imath}$ draws $\beta_Ee_i$ amount of energy out. Thus,  energy sharing without disturbing balance in the total demand and supply can be accomplished via the aggregator. Here, $0\leq\beta_E\leq1$ is the  energy transfer efficiency factor between the two BSs that specifies the unit energy loss through the aggregator for the transferred amount of  power.\footnote{{ It is worth noting that there also exists an alternative  approach to realize the energy sharing by direct power-line connection between the BSs \cite{Chia2013}. In this approach, since dedicated power lines may need to be newly deployed, it may require higher deployment cost than the aggregator-assisted energy sharing.  Note that such an approach has been implemented in the smart grid deployments, e.g., to realize the energy transfer among different micro-grids \cite{Saad2011}.}}

\begin{figure}[t]
  \centering
    \includegraphics[width=8cm]{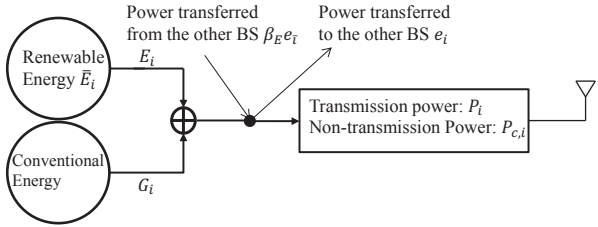}\\
  \caption{Energy management schematic of BS $i$.}\label{Energy_Model}
\end{figure}

As depicted in Fig. \ref{Energy_Model}, the energy management scheme at each BS $i\in\{1,2\}$ operates as follows.  First, at the beginning of each time slot, BS $i$ purchases the conventional energy $G_i$  and the renewable energy $E_i$. Second, it performs energy cooperation by either transferring $e_i$ amount of energy to BS $\bar{\imath}$ or collecting the exchanged energy $\beta_Ee_{\bar{\imath}}$ from BS $\bar{\imath}$. Finally, BS $i$ consumes a constant non-transmission power $P_{c,i}$ to maintain its routine operation and a transmission power $P_i$ for flexible downlink transmission. By considering transmission power, non-transmission power, and shared power between BSs, we can obtain the total power consumption at BS $i$, which is constrained by the total power supply:
\begin{align}
  \frac{1}{\eta}P_{i}+P_{c,i}\leq E_i+G_i+\beta_E e_{\bar{\imath}}-e_{i},~i \in \{1,2\},\label{energy}
\end{align}
 where $0<\eta\le 1$ is the power amplifier (PA) efficiency. Since $\eta$ is a constant, we normalize it as $\eta=1$ in the sequel.

\subsection{Spectrum Cooperation Model}
We now explain the spectrum cooperation by considering two cases: adjacent and non-adjacent frequency bands. First, consider the case of  adjacent frequency bands in Fig. \ref{bandwidth_sharing}. As shown in Fig. \ref{bandwidth_sharing}, BS 1 and BS 2 operate in the blue and red shaded frequency bands  $W_1$ and $W_2$, respectively. Between $W_1$ and $W_2$ a guard band $W_G$ is inserted to avoid interference due to out-of-band emissions. Let the shared spectrum bandwidth from BS 1 to BS 2 be denoted as $w_1\geq0$ and that from BS 2 to BS 1 as $w_2\geq0$. In this case,  the shared bandwidth from BS $i$ (i.e., $w_i$) can be fully used at BS $\bar{\imath}$, $i \in \{1,2\}$. This can be implemented by carefully moving the guard band as shown in Fig. \ref{bandwidth_sharing}.   When BS 2 shares a bandwidth  $w_2$ to BS 1, the green shaded guard band with bandwidth $W_G$ is moved accordingly between $W_1+w_2$ and $W_2 - w_2$,  such that the shared spectrum is fully utilized.

\begin{figure}
        \centering
        \begin{subfigure}[b]{0.5\textwidth}
  \centering
  \includegraphics[width=8cm]{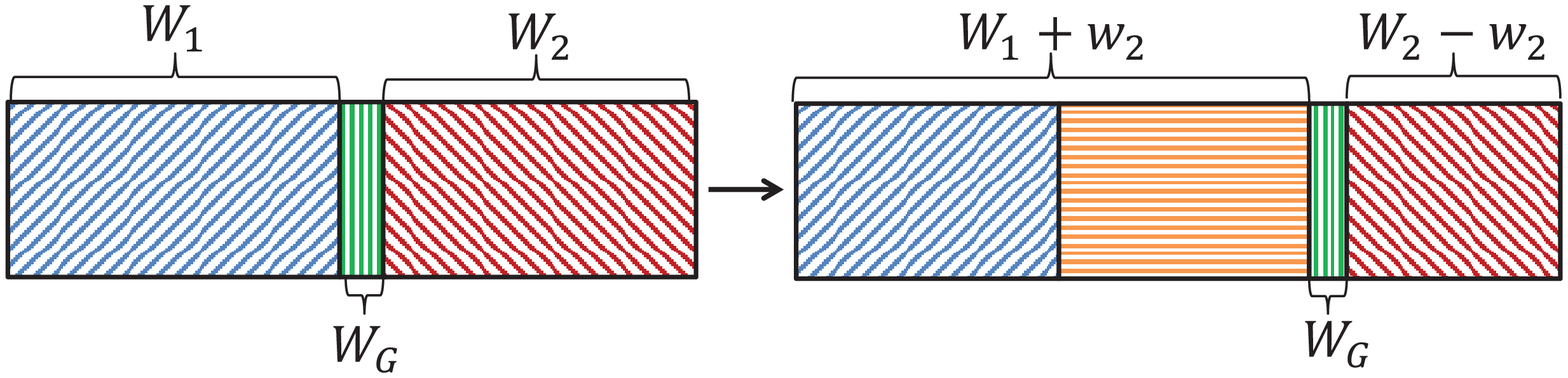}\\
  \caption{Adjacent frequency bands.}\label{bandwidth_sharing}
        \end{subfigure}\\
        \begin{subfigure}[b]{0.5\textwidth}
  \centering
  \includegraphics[width=8cm]{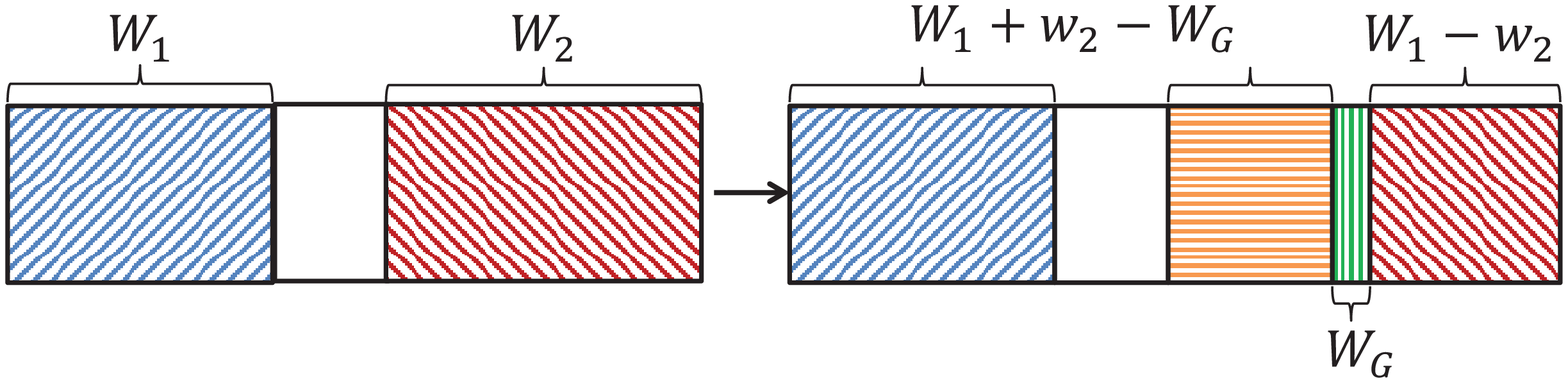}\\
  \caption{Non-adjacent frequency bands.}\label{bandwidth_sharing2}
        \end{subfigure}
        \caption{An example of spectrum cooperation between two BSs.}\label{fig:bandwidthcooperation}

\end{figure}
Next, consider the case of non-adjacent frequency bands in Fig. \ref{bandwidth_sharing2}.  For this non-adjacent frequency band, a guard band is also needed to avoid the inter-system interference. Fig. \ref{bandwidth_sharing2} shows an example of the spectrum cooperation when BS 2 shares a bandwidth $w_2$ to BS 1. After spectrum cooperation, the total usable spectrum of BS 1 is $W_1+w_2-W_G$, since a green shaded guard band $W_G$ is inserted between $W_1 +w_2-W_G$ and $W_2 -w_2$ to avoid the inter-system interference. As a result, spectrum cooperation loss will occur.\footnote{ It is technically challenging  to gather non-adjacent pieces of bandwidth together at one BS. To overcome this issue, the carrier aggregation solution for the LTE-Advanced system \cite{Qualcomm} can be utilized here. }

For the ease of investigation, in this paper we only focus on the former case of adjacent frequency bands.\footnote{It should be noticed that our result can also be extended to the non-adjacent bandwidth case by considering the bandwidth loss due to the guard band.} We define a spectrum cooperation factor $\beta_B \in \{0,1\}$, for which $\beta_B=1$ denotes that spectrum cooperation is implementable  between BSs and $\beta_B=0$ represents that spectrum cooperation is infeasible.   Considering the spectrum cooperation between the BSs, the bandwidth used by BS $i$ can be expressed as
\begin{align}
  B_i\leq W_i+\beta_Bw_{\bar{\imath}}-w_i,~i\in\{1,2\}.\label{bandwidthconsumption}
\end{align}
Note that in our investigated spectrum cooperation, the (shared) spectrum resources can only be utilized by either BS 1 or BS 2 to avoid the interference between the two systems. If the same spectrum resources can be used by the two systems at the same time, then the spectrum utilization efficiency can be further improved while also  introducing inter-system interference \cite{Gesbert2010}. In this case, more sophisticated interference coordination should be implemented, which is beyond the scope of this work.

\subsection{Downlink Transmission Under Energy and Spectrum Cooperation}
We now introduce the downlink transmission at each BS by incorporating the energy and spectrum cooperation. We consider a flat fading channel model for each user's downlink transmission, and denote the channel gain from BS $i\in\{1,2\}$ to its associated MT $k$ as $g_k$, $k\in\mathcal{K}_i$, which in general includes the pathloss, shadowing and antenna gains. Within each system, we assume orthogonal transmission to support multiple MTs, e.g., by applying orthogonal frequency-division multiple access (OFDMA). Accordingly, the signal-to-noise-ratio (SNR) at each MT $k$ is given by
\begin{align}\label{eqn:6}
\mathtt{SNR}_{k} = \frac{g_k p_k}{b_k N_0},  k\in\mathcal{K}_1\cup\mathcal{K}_2,
\end{align}
where $N_0$ denotes the power spectral density (PSD) of the additive white Gaussian noise (AWGN), $p_k\ge 0$ and $b_k\ge 0$ denote the allocated power and bandwidth to MT $k\in\mathcal{K}_1\cup\mathcal{K}_2$, respectively. We can aggregate all the transmission power and bandwidth used by each BS as (cf. (\ref{energy}) and (\ref{bandwidthconsumption}))
\begin{align}
P_i = \sum_{k\in\mathcal{K}_i}p_k,~B_i =\sum_{k\in\mathcal{K}_i}b_k,~i \in \{1,2\}\nonumber.
\end{align}
Note that the bandwidth and power allocation are performed in each slot on the order of seconds, which is much longer than the coherence time of wireless channels (on the order of milliseconds). As a result, the SNR defined in (\ref{eqn:6}) is time-averaged over the dynamics of wireless channels, and thus the fast fading is averaged out from the channel gain $g_k$'s.

To characterize the QoS requirements of each MT, we define its performance metric as a utility function $u_{k}(b_k,p_k)$, and assume that it satisfies the following three properties:
\begin{itemize}
  \item[1)] The utility function is non-negative, i.e., $u_{k}(b_k,p_k) \ge 0,~\forall p_k\geq 0$, $b_k\geq0$, where $u_{k}(0,p_k) = 0$ and $u_{k}(b_k,0) = 0$;
  \item[2)] The utility increases as a function of allocated power and bandwidth, i.e., $u_{k}(b_k,p_k)$ is monotonically increasing with respect to $b_k$ and $p_k$, $\forall p_k\geq 0$, $b_k\geq0$;
  \item[3)] The marginal utility decreases as the allocated power and bandwidth increase, i.e., $u_{k}(b_k,p_k)$ is jointly concave over $b_k$ and $p_k$, $\forall p_k\geq 0$, $b_k\geq0$.
\end{itemize}
For example, the achievable data rate at MT $k$ defined as follows is a feasible utility function that satisfies the above three properties \cite{Tse2005}
\begin{align}
u_{k}(b_k,p_k)=b_k\log_2(1+\mathtt{SNR}_{k})=b_k\log_2\bigg(1+\frac{g_kp_k}{b_kN_0}\bigg)\label{capacity}.
\end{align}
In the rest of this paper, we employ the utility function in (\ref{capacity}) for all MTs in the two systems and averting to another function will not change our main engineering insights.
Due to the fact that most cellular network services are QoS guaranteed (e.g., minimum date rate in video call), we ensure the QoS requirement at each MT $i$ by setting a minimum utility threshold {$r_k > 0$, $k\in\mathcal{K}_1\cup\mathcal{K}_2$}. The value of $r_k$ is chosen according to the type of service at MT $k$. Accordingly, the resultant QoS constraint is given by
\begin{align}
b_k\log_2\left(1+\frac{g_k p_k}{b_k N_0}\right) \ge r_k,~\forall k\in\mathcal{K}_1\cup\mathcal{K}_2.\label{QoS}
\end{align}

\begin{table}[t]\caption{List of notations and their physical meanings}
\centering
\begin{tabular}{|l | p{7.5cm}| }
\hline
$\bar{E}_i$ & Maximum purchasable  renewable energy at BS $i$\\
$W_i$  & Available bandwidth at BS $i$\\
$P_{c,i}$  & Constant non-transmission power consumption at BS $i$\\
$\alpha_i^E$  & Price of per-unit renewable energy for BS $i$\\
$\alpha_i^G$ & Price of per-unit conventional energy for BS $i$\\
$\beta_E$ & Energy cooperation efficiency between two BSs\\
$\beta_B$ & Spectrum cooperation factor between two BSs\\
$r_k$ & QoS requirement of MT $k$\\
$E_i$  &Renewable energy drawn at BS $i$\\
$G_i$ & Conventional energy drawn from the grid by BS $i$\\
$p_k$  & Allocated power to MT $k$\\
$b_k$  & Allocated bandwidth to MT $k$\\
$e_i$  & Shared energy from BS $i$  to BS $\bar{\imath}$\\
$w_i$  & Shared spectrum from BS $i$  to BS $\bar{\imath}$\\
$\mv{x}^{\mathrm{in}}_i$  & Intra-network energy and bandwidth
allocation vector at BS $i$ consisting of $G_i$, $E_i$, $p_k$ and $b_k, \forall k\in\mathcal{K}_i$\\
$\mv{x}^{\mathrm{ex}}$ & Inter-network energy and spectrum cooperation vector consisting of $e_1, e_2, w_1$ and $w_2$\\
$\mv{x}$ & An aggregated vector consisting all the decision variables of the two BSs\\
\hline
\end{tabular}
\label{tab:TableOfNotationForMyResearch}

\end{table}

\section{Problem Formulation}\label{sec:problem_formulation}
 We aim to reduce the costs $C_1$ and $C_2$ in (\ref{cost}) at both BSs while guaranteeing the QoS requirements of all MTs. We denote the intra-system decision vector for BS $i\in\{1,2\}$ as $\mv{x}_i^{\rm{in}}$, which consists of its energy drawn from renewable energy $E_i$, energy drawn from the grid $G_i$, power allocation $p_k$'s, and bandwidth allocation $b_k$'s with $k\in\mathcal{K}_i$. We also denote $\mv{x}^{\rm{ex}}=[e_1,e_2,w_1,w_2]^T$ as the inter-system energy and spectrum cooperation vector. For convenience, we aggregate all the decision variables of the two BSs as $\mv{x}\triangleq[{\mv{x}_1^{\rm{in}}}^T, {\mv{x}_2^{\rm{in}}}^T, {{\mv{x}^{\rm{ex}}}}^{T}]^T$. All the notations used in this paper are summarized and explained in Table \ref{tab:TableOfNotationForMyResearch} for the ease of reading.

It can be shown that the two systems (if not belonging to the same entity)  have conflicts in cost reduction under the joint energy and spectrum cooperation. For example, if BS 1 shares both energy and spectrum to BS 2, then the cost of BS 2 is reduced while the cost of BS 1 increases.  To characterize such conflicts, we  define the achievable cost region under the joint energy and spectrum cooperation as the cost tuples that the two BSs can achieve simultaneously, which is explicitly characterized by
\begin{align}\label{region}
  \mathcal{C}\triangleq \bigcup_{\mv{x}\ge \bf{0},\mv{x}\in\mathcal{X}} \{(c_1,c_2): C_i(\mv{x})\leq c_i,i \in \{1,2\}\},
\end{align}
where $\mathcal{X}$ is the feasible set of $\mv{x}$  specified by (\ref{renewable}), (\ref{energy}), (\ref{bandwidthconsumption}) and (\ref{QoS}), and $C_i(\mv{x})$ is the achieved cost of BS $i \in \{1,2\}$ in (\ref{cost}) under given $\mv{x}$. The boundary of this region is then called the {\it Pareto boundary}, which consists of the Pareto optimal cost tuples at which it is impossible to decrease one's cost without increasing the other's. Since the feasible region $\mathcal{X}$ can be shown to be convex and the cost in (\ref{cost}) is affine, the cost region in (\ref{region}) is convex. Also, because the Pareto optimal points of any convex region can be found by solving a series of weighted sum minimization problems with different weights \cite{Boyd04}, we can achieve different Pareto optimal cost tuples by solving the following weighted sum cost minimization problems

\begin{subequations}
\begin{align}
\mathrm{(P1)}:~\mathop{\mathtt{min.}}_{\mv{x}\ge \bf{0}}&~~
\sum_{i=1}^{2}\gamma_i(\alpha^{E}_iE_i+\alpha^{G}_iG_{i})\label{P1:obj}\\
\mathtt{s.t.}
&~ \sum_{k\in\mathcal{K}_i}p_{k}+P_{c,i}\leq E_i+G_i+\beta_{E} e_{\bar{\imath}}-e_{i},~ \forall i \in \{1,2\},\label{P1:cons1}\\
&~\sum_{k\in\mathcal{K}_i}b_k\leq W_i+\beta_Bw_{\bar{\imath}}-w_i, \forall i \in \{1,2\},\label{P1:cons3}\\
&~E_{i}\leq \bar{E}_i,~\forall i \in \{1,2\},\label{P1:cons2}\\&~~ b_{k}\log_2\bigg(1+\frac{g_{k}p_{k}}{ b_{k}N_0}\bigg)\geq r_{k},~ \forall k\in \mathcal{K}_1\cup \mathcal{K}_2\label{P1:cons4},
\end{align}
\end{subequations}
where $\gamma_i\ge 0,i \in \{1,2\}$ is the cost weight for BS $i$, which specifies the trade-offs between the two BSs' costs. By solving (P1) given different $\gamma_i$'s, we can characterize the entire Pareto boundary of the cost region. For the solution of the optimization problem, standard convex optimization techniques such as the interior point method can be employed to solve (P1) \cite{Boyd04}. However, in order to gain more engineering insights, we propose an efficient algorithm for problem (P1) by applying the Lagrange duality method in Section \ref{sec:fullcoop}. Before we present the solution for (P1), in this section, we first consider a special case where there is neither energy nor spectrum cooperation between the two systems (i.e., $\beta_E=\beta_B=0$). This serves as a performance benchmark for comparison with  fully or partially cooperative systems  in Sections \ref{sec:fullcoop} and \ref{sec:distributed}.

\subsection{Benchmark Case: Non-cooperative Systems}\label{sec:nocoop}
With $\beta_E=\beta_B=0$, the two systems will not cooperate and the optimal solution of (P1) given any $\gamma_i$'s is attained with zero inter-system exchange, i.e.,  $w_1 = w_2 = e_1 = e_2 = 0$. In this case, the constraints in (\ref{P1:cons1}) and (\ref{P1:cons3}) reduce to $\sum_{k\in\mathcal{K}_i}p_{k}+P_{c,i}\leq E_i+G_i$ and $\sum_{k\in\mathcal{K}_i}b_{k}\leq W_i,~\forall i \in \{1,2\}$, respectively. It thus follows that the intra-system energy and bandwidth allocation vectors $\mv{x}_1^{\rm{in}}$ and $\mv{x}_2^{\rm{in}}$ are decoupled in both the objective and the constraints of problem (P1). As a result, (P1) degenerates to two cost-minimization problems as follows (one for each BS $i\in\{1,2\}$):
\begin{subequations}
\begin{align}
\mathrm{(P2)}:~\mathop{\mathtt{min.}}_{\mv{x}^{\rm{in}}_{i}\geq0} &~
 ~ \alpha^{E}_iE_i+\alpha^{G}_iG_{i}\nonumber\\
\mathtt{s.t.}&~~ \sum_{k\in\mathcal{K}_i}p_{k}+P_{c,i}\leq E_i+G_i\label{p1cons1:nocoop},\\
&~~E_{i}\leq \bar{E}_i\label{p1cons2:nocoop},\\
&~~\sum_{k\in\mathcal{K}_i}b_{k}\leq W_i\label{p1cons3:nocoop},\\
&~~b_{k}\log_2\bigg(1+\frac{g_{k}p_{k}}{ b_{k}N_0}\bigg)\geq r_{k},~\forall  k\in \mathcal{K}_i.\label{p1cons4:nocoop}
\end{align}
\end{subequations}
{  Note that Problem (P2) is always feasible due to the fact that the BS can purchase energy from the grid without limit. Therefore, we can always find one feasible solution to satisfy all the constraints in (\ref{p1cons1:nocoop})-(\ref{p1cons4:nocoop}). } It is easy to show that at the optimality of problem (P2), the constraints  (\ref{p1cons3:nocoop}) and (\ref{p1cons4:nocoop}) are both tight, otherwise, one can reduce the cost by reducing the allocated power $p_k$ (and/or increasing the allocated bandwidth $b_k$) to MT $k$. Then, the power allocation for each user can be expressed as
\begin{align}
p_k=\frac{b_kN_0}{g_k}\bigg(2^{\frac{r_k}{b_k}}-1\bigg),~\forall k\in\mathcal{K}_i.\label{energy:nocoop}
\end{align}
By substituting (\ref{energy:nocoop}) into (P2) and applying the Karush-Kuhn-Tucker (KKT) condition, we have the closed-form optimal solution to (P2) in the following proposition. { Note that the optimal solution is unique, since the constraints in (\ref{p1cons4:nocoop}) are strictly convex over $b_k$'s and $p_k$'s, $\forall k\in\mathcal{K}$.}
\begin{proposition}\label{prop_nocoop}
The optimal bandwidth allocation for (P2) is given by
\begin{align}
&b_k^\star=\frac{\ln2\cdot r_k}{\mathcal{W}(\frac{1}{e}(\frac{\nu_i^\star
g_k}{N_0}-1))+1}\label{Lambertx:nocoop}
  ,~\forall k\in\mathcal{K}_i,
  \end{align}
where $\mathcal{W}(\cdot)$ is Lambert $\mathcal{W}$ function \cite{Corless1996} and $\nu_i^\star\ge 0$ denotes the water level that satisfies $\sum_{k\in\mathcal{K}_i}b_k^{\star}= W_i$. Furthermore, the optimal power allocation and energy management in (P2) are given by
  \begin{align}
&p_k^\star=\frac{b_k^\star N_0}{g_k}\bigg(2^{\frac{r_k}{b_k^\star}}-1\bigg),~\forall k\in\mathcal{K}_i,\nonumber\\
&E_i^\star=\max\bigg(\sum_{k\in\mathcal{K}_i}p^\star_k+P_{c,i},\bar{E}_i\bigg),\nonumber\\
&G_i^{\star}=\max\left(\sum_{k\in\mathcal{K}_i}p^\star_k+P_{c,i}-\bar{E}_i,0\right).\nonumber
\end{align}
\end{proposition}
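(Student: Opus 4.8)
The plan is to peel off the variables in stages, reducing (P2) to a per-user convex bandwidth-allocation problem whose stationarity condition can be inverted in closed form. First I would establish that at any optimum the QoS constraints (\ref{p1cons4:nocoop}) and the bandwidth budget (\ref{p1cons3:nocoop}) both hold with equality, exactly as the surrounding text argues: if some (\ref{p1cons4:nocoop}) were slack one could shrink $p_k$ and strictly lower the objective, and if (\ref{p1cons3:nocoop}) were slack one could enlarge some $b_k$ and, by monotonicity and concavity of the rate function, meet the same $r_k$ with less power. Tightness of (\ref{p1cons4:nocoop}) lets me solve for $p_k$ as in (\ref{energy:nocoop}), eliminating the power variables and turning the transmit power into the explicit function $p_k(b_k)=\frac{b_kN_0}{g_k}(2^{r_k/b_k}-1)$ of the bandwidth alone.

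Next I would dispose of the energy-procurement variables. Since $\alpha_i^E<\alpha_i^G$, renewable energy is strictly cheaper, so the optimal policy buys renewable up to the cap $\bar{E}_i$ before drawing anything from the grid; writing the total demand as $\sum_{k}p_k+P_{c,i}$, this immediately yields the stated $E_i^\star$ and $G_i^\star$ and, crucially, makes the objective a nondecreasing function of the total transmit power $\sum_k p_k$. Consequently (P2) collapses to minimizing $\sum_{k\in\mathcal{K}_i}\frac{b_kN_0}{g_k}(2^{r_k/b_k}-1)$ subject to $\sum_{k}b_k=W_i$ and $b_k\ge 0$. I would verify that each summand $p_k(b_k)$ is convex in $b_k$ (it is a perspective-type function of the convex map $t\mapsto 2^t-1$), so this is a convex program with a single linear equality and the KKT conditions are necessary and sufficient.

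Forming the Lagrangian with multiplier $\nu_i$ for the budget and setting $\frac{\partial}{\partial b_k}\big(p_k(b_k)+\nu_i b_k\big)=0$ gives the stationarity relation $\frac{r_k\ln 2}{b_k}2^{r_k/b_k}-2^{r_k/b_k}+1=\frac{\nu_i g_k}{N_0}$. The main obstacle is inverting this transcendental equation for $b_k$: I would substitute $y=\frac{r_k\ln 2}{b_k}$, so that $2^{r_k/b_k}=e^{y}$ and the relation becomes $(y-1)e^{y}=\frac{\nu_i g_k}{N_0}-1$, i.e. $(y-1)e^{y-1}=\frac{1}{e}\big(\frac{\nu_i g_k}{N_0}-1\big)$; by the defining property $\mathcal{W}(z)e^{\mathcal{W}(z)}=z$ this gives $y-1=\mathcal{W}\big(\frac{1}{e}(\frac{\nu_i g_k}{N_0}-1)\big)$, which rearranges to exactly (\ref{Lambertx:nocoop}). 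Finally I would pin down the water level: each $b_k^\star$ is monotone in $\nu_i$, so a unique $\nu_i^\star$ enforces $\sum_k b_k^\star=W_i$, and the strict convexity already noted for (\ref{p1cons4:nocoop}) guarantees the minimizer, hence the entire solution, is unique. Back-substitution then recovers $p_k^\star$ through (\ref{energy:nocoop}) and the claimed $E_i^\star$ and $G_i^\star$.
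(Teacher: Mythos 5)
Your proof is correct and follows essentially the same route as the paper's Appendix A: tighten the QoS and bandwidth constraints, eliminate $p_k$ via (\ref{energy:nocoop}), reduce (P2) to the convex bandwidth program (\ref{simplifiednocoop}), and read the water-filling solution off the KKT stationarity condition, with the energy split following from $\alpha_i^E<\alpha_i^G$ and tightness of (\ref{p1cons1:nocoop}). Your explicit substitution $y=r_k\ln 2/b_k$ that inverts the stationarity equation into the Lambert $\mathcal{W}$ form is a detail the paper leaves implicit, and your reasoning correctly delivers $E_i^\star=\min\big(\sum_{k\in\mathcal{K}_i}p_k^\star+P_{c,i},\bar{E}_i\big)$ --- the $\max$ appearing in the proposition's statement of $E_i^\star$ is evidently a typo, so no fault lies with your argument.
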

\begin{proof}
See Appendix \ref{no_coop_prop}.
\end{proof}

In Proposition \ref{prop_nocoop}, the bandwidth allocation $b_k^{\star}$ can be interpreted as waterfilling over different MTs with $\nu_i^{\star}$ being the water level, and the power allocation $p_k^{\star}$ follows from (\ref{energy:nocoop}). Furthermore, the optimal solution of $E_i^{\star}$ and $G_i^{\star}$ indicate that BS $i$ first purchases energy from the renewable energy firm, and (if not enough) then  from the grid. This is intuitive due to the fact that the renewable energy is cheaper  ($\alpha_i^E<\alpha_i^G$).

\section{Centralized Energy and Spectrum Cooperation for Fully Cooperative Systems}\label{sec:fullcoop}

In this section, we consider problem (P1) with given weights $\gamma_1$ and $\gamma_2$ for the general case of $\beta_B\in\{0,1\}$ and $0\leq\beta_E\leq1$. This corresponds to the scenario where the two BSs belong to the same entity and thus can fully cooperate to solve (P1) to minimize the weighted sum cost. Similar to (\ref{energy:nocoop}) in (P2), we can show that the QoS constraints in (\ref{P1:cons4}) should always be tight for the optimal solution of (P1). As a result, the power allocation for each MT in (P1) can also be expressed as (\ref{energy:nocoop}) for all $i \in \{1,2\}$. By substituting (\ref{energy:nocoop}) into the power constraint (\ref{P1:cons1}) in (P1) and then applying the Lagrange duality method, we obtain the closed-form solution to (P1) in the following proposition.

\begin{proposition}\label{optimalP1}
The optimal bandwidth and power allocation  solutions to problem (P1) are given by
\begin{align}
  b_{k}^{\star}=&\frac{\ln 2\cdot r_k}{\mathcal{W}\left(\frac{1}{e}\left(\frac{\lambda_i^{\star} g_k}{\mu_i^{\star}N_0}-1\right)\right)+1},~\forall k\in\mathcal{K}_1\cup\mathcal{K}_2,
  \nonumber\\
  p_k^\star=&\frac{b_k^\star N_0}{g_k}\left(2^{\frac{r_k}{b_k^\star}}-1\right),~\forall k\in\mathcal{K}_1\cup\mathcal{K}_2,\nonumber
\end{align}
where $\lambda_i^{\star}$ and $\mu_i^{\star}$ are non-negative constants (dual variables) corresponding to the power constraint (\ref{P1:cons1}) and the bandwidth constraint (\ref{P1:cons3}) for BS $i\in\{1,2\}$, respectively.\footnote{The optimal dual variables $\{\lambda_i^{\star}\}_{i=1}^2$ and $\{\mu_i^{\star}\}_{i=1}^2$ can be obtained by solving the dual problem of (P1) as explained in Appendix \ref{proofoptimalP1}.} Moreover,  the optimal spectrum sharing between the two BSs are
\begin{align}
 w_i^\star= \max{\bigg(-\sum_{k\in\mathcal{K}_i}b_k^\star+W_i,0\bigg)},\forall i \in \{1,2\}.\label{optimal_spectrum}
\end{align}
Finally, the optimal energy decisions at two BSs $\{E_i^\star\},\{G_i^\star\}$ and  $\{e_i^\star\}$ are the solutions to the following problem.
\begin{align}
\mathrm{(P3)}:\nonumber\\
\mathop{\mathtt{min.}}_{\{E_i,G_i,e_i\}} & \sum_{i=1}^{2}\gamma_i(\alpha^{E}_iE_i+\alpha^{G}_iG_{i})\nonumber\\
\mathtt{s.t.}
& \sum_{k\in\mathcal{K}_i}p_{k}^{\star}+P_{c,i}= E_i+G_i+\beta_E e_{\bar{\imath}}-e_{i}, ~ \forall i \in \{1,2\}\nonumber,\\
&0\leq E_i\leq \bar{E}_i, ~G_i\ge0,~e_i\ge0,~ \forall i \in \{1,2\}.  \nonumber
\end{align}
\end{proposition}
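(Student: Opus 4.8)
The plan is to exploit the convexity of (P1) already noted in the text: the feasible set $\mathcal{X}$ is convex and the objective (\ref{P1:obj}) is affine, while the QoS constraints (\ref{P1:cons4}) are convex because $b_k\log_2(1+g_kp_k/(b_kN_0))$ is jointly concave. Since grid energy is unlimited, a strictly feasible point always exists, so Slater's condition holds and the KKT conditions are necessary and sufficient for global optimality. First I would argue, exactly as for (\ref{energy:nocoop}) in Proposition \ref{prop_nocoop}, that every QoS constraint (\ref{P1:cons4}) must be tight: were one slack, one could decrease the corresponding $p_k$, relax the power constraint (\ref{P1:cons1}), and strictly lower the purchased energy, contradicting optimality. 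This yields $p_k^\star=\frac{b_k^\star N_0}{g_k}(2^{r_k/b_k^\star}-1)$ and lets me substitute $p_k(b_k)$ into (\ref{P1:cons1}), leaving a convex program in $b_k$, $E_i$, $G_i$, $e_i$, $w_i$.

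Next I would form the Lagrangian, attaching multipliers $\lambda_i\ge0$ to the power constraints (\ref{P1:cons1}) and $\mu_i\ge0$ to the bandwidth constraints (\ref{P1:cons3}). Since the objective contains no $b_k$, the only Lagrangian terms in $b_k$ (for $k\in\mathcal{K}_i$) are $\lambda_i p_k(b_k)+\mu_i b_k$, and because $r_k>0$ forces $b_k^\star>0$ the bound $b_k\ge0$ is inactive, so stationarity reduces to $\lambda_i p_k'(b_k)+\mu_i=0$. Writing $z=r_k\ln2/b_k$ gives $p_k'(b_k)=\frac{N_0}{g_k}\left(e^{z}(1-z)-1\right)$, and this rearranges into a transcendental equation whose right-hand side is $g_k/N_0$ scaled by the ratio of the two multipliers. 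The substitution $u=z-1$ casts it into the canonical Lambert form $ue^{u}=\frac1e(\cdots)$, whose root $u=\mathcal{W}(\cdots)$ back-substitutes through $b_k^\star=r_k\ln2/z$ to give the stated expression, in which the multiplier ratio plays the role of the water level $\nu_i^\star$ of Proposition \ref{prop_nocoop}. The power allocation $p_k^\star$ then follows from the now-tight QoS constraint.

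For the spectrum-sharing rule (\ref{optimal_spectrum}) I would use complementary slackness. Stationarity in $w_i$ gives $\mu_i-\beta_B\mu_{\bar{\imath}}=\xi_i$, where $\xi_i\ge0$ is the multiplier of $w_i\ge0$ with $\xi_i w_i=0$; hence at most one sharing direction can be active, and whenever $\mu_i>0$ the bandwidth constraint (\ref{P1:cons3}) holds with equality. Combining the tight bandwidth constraint with the fact that only the BS possessing spare bandwidth shares it yields $w_i^\star=\max(W_i-\sum_{k\in\mathcal{K}_i}b_k^\star,0)$. Finally, once $\{b_k^\star,p_k^\star\}$ (and therefore $\sum_{k\in\mathcal{K}_i}p_k^\star$) are fixed, every remaining term of the objective and of the constraints (\ref{P1:cons1}), (\ref{P1:cons2}) involves only the energy variables $E_i$, $G_i$, $e_i$; collecting them reproduces (P3) verbatim, with the power constraint appearing as an equality because $\lambda_i>0$ (purchasing surplus energy can only raise the cost).

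The two steps I expect to be delicate are the algebraic inversion of the stationarity condition into Lambert-$\mathcal{W}$ form — one must track signs carefully so that the argument $\frac1e(\cdots)$ lands on the branch where $\mathcal{W}$ is well defined and $b_k^\star>0$ — and the directional argument for spectrum sharing, where complementary slackness must be invoked to rule out simultaneous two-way exchange and to justify the $\max(\cdot,0)$ truncation. The decoupling of the energy decisions into (P3) is then essentially bookkeeping.
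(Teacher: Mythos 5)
Your proposal is correct and follows essentially the same route as the paper's Appendix B: substitute the tight QoS constraints, apply Lagrange duality/KKT to get the Lambert-$\mathcal{W}$ bandwidth allocation, recover $w_i^\star$ from the tight bandwidth constraints together with $w_1^\star\cdot w_2^\star=0$, and leave the energy variables $\{E_i,G_i,e_i\}$ to the LP (P3); the paper merely makes the duality more explicit (a boundedness lemma on the dual variables, subproblem decomposition, and the ellipsoid method, emphasizing that the Lagrangian minimizers of $E_i,G_i,e_i,w_i$ are non-unique and must be recovered afterwards). Two cosmetic caveats: your $\lambda_i$/$\mu_i$ labels are swapped relative to the appendix (which attaches $\mu_i$ to the power constraint), and your claim that complementary slackness alone rules out simultaneous two-way spectrum exchange is loose when $\beta_B=1$ (it forces $\mu_1=\mu_2$ and both sign multipliers to vanish, so one must instead normalize by subtracting $\min(w_1,w_2)$, as the paper implicitly does) --- neither affects the validity of the argument.
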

\begin{proof}
See Appendix \ref{proofoptimalP1}.
\end{proof}

Note that problem (P3) is a simple linear program (LP) and thus can be solved by existing software such as CVX  \cite{Grant2011}. Also note that there always exists an optimal solution of $\{e_i^\star\}$ in (P3) with $e_1^\star\cdot e_2^\star = 0$.{\footnote{{ If   $e_1^{\star} \cdot e_2^{\star} = 0$ does not hold,  we can find another feasible  energy cooperation solution $e_i^{{\star}'}=e_i^\star-\min(e_1^\star,e_2^\star),~\forall i\in\{1,2\}$, with $e_1^{\star' }\cdot e_2^{\star' }= 0$, to achieve no larger weighted sum cost.  }}}
It should be noted that the optimal solution in Proposition \ref{optimalP1} can only be obtained in a centralized manner. { Specifically, to perform the joint energy and spectrum cooperation, the information at both systems (i.e.,  the energy price $\alpha^E_i$ and $\alpha_i^G$, the available renewable energy $\bar E_i$, the circuit power consumption $P_{c,i}$, the channel gain $g_k$ and their QoS requirement $\bar r_k,~\forall k\in \mathcal K_i, i\in\{1,2\}$) should be gathered at a central unit, which can be one of the two BSs or a third-party controller. Since the limited information is exchanged over the time scale of power and bandwidth allocation, which is on the order of second, while the communication block usually has a length of several milliseconds, the information exchange can be efficiently implemented.
}

It is interesting to make a comparison between the optimal solution of problem (P1) in Proposition \ref{optimalP1} and that of problem (P2) in Proposition \ref{prop_nocoop}. First, it follows from the solution of $w_i^{\star}$ in (\ref{optimal_spectrum}) that if $\beta_B=1$, then the bandwidth can be allocated in the two systems more flexibly, and thus resulting in a spectrum cooperation gain  in terms of cost reduction as compared to the non-cooperative benchmark in Proposition \ref{prop_nocoop}. Next, from the LP in (P2) with $0<\beta_E\le 1$, it is evident that the BSs will purchase energy by comparing the weighted energy prices given as $\gamma_i\alpha^{E}_i$ and $\gamma_i\alpha^{G}_i, i\in\{1,2\}$. For instance, when system $i$'s weighted renewable energy price $\gamma_i\alpha_i^E$ is higher than $\gamma_{\bar{\imath}}\alpha_{\bar{\imath}}^E$ of the other system, then this system $i$ will try to first request the other system's renewable energy rather than drawing energy from its own dedicated renewable utility firm. In contrast, for the non-cooperative benchmark in Proposition \ref{prop_nocoop}, each BS always draws energy first from its own renewable energy, and then from the grid. Therefore, the energy cooperation changes the energy management behavior at each BS, and thus results in an energy cooperation gain in terms of cost reduction. It is worth noting that to minimize the weighted sum cost in the full cooperative system, it is possible for one system to contribute both spectrum and energy resources to the other (i.e., $w_i^{\star}> 0$ and $e_i^{\star} > 0$  for any  $ i\in\{1,2\}$), or one system exchanges its energy while the other shares its spectrum in return (i.e., $w_i^{\star} > 0$ and $e_{\imath}^{\star}>  0$ for any $ i\in\{1,2\}$). These two scenarios are referred to as {\it uni-directional cooperation} and {\it bi-directional cooperation}, respectively.

\section{Distributed Energy and Spectrum Cooperation for Partially Cooperative Systems}\label{sec:distributed}
In the previous section, we have proposed an optimal centralized algorithm to achieve the whole Pareto boundary of the cost region. However, this requires the fully cooperative nature and does not apply to the scenario where the two systems have their own interests (e.g., belonging to different selfish entities). Regarding this, we proceed to present a partially cooperative system that
implements the joint energy and spectrum cooperation ($0 < \beta_E \le 1$, $\beta_B = 1$) to achieve a Pareto optimum with limited information exchange in coordination. Different from the fully cooperative system that can perform both uni-directional and bi-directional cooperation, the partially cooperative systems seek mutual benefits to decrease both systems' cost simultaneously, in which only bi-directional cooperation is feasible.\footnote{{ Due to the mutual benefit, we believe that both systems have incentives for partial cooperation. Moreover, such incentives can be further strengthened in the future wireless systems envisioned to have more expensive energy and spectrum.}}  In the following, we first analytically characterize the conditions for partial cooperation. Then, we propose a distributed algorithm that can achieve the Pareto optimality.

\subsection{Conditions for  Partial Cooperation}

We define a function $\bar{C}_i(\mv{x}^{\mathrm{ex}})$ to represent the minimum cost at BS $i$ under any given energy and spectrum cooperation scheme $\mv{x}^{\mathrm{ex}}$, which is given as:
\begin{align}
\bar{C}_i(\mv{x}^{\mathrm{ex}})=\mathop{\mathtt{min.}}_{\mv{x}_i^{\rm{in}}\geq\bm{0}} &~ \alpha^{E}_iE_i+\alpha^{G}_iG_{i}\nonumber\\
\mathtt{s.t.} &~(\rm{\ref{P1:cons1}}), (\mathrm{\ref{P1:cons3}}), (\mathrm{\ref{P1:cons2}})~{\rm{and}}~(\mathrm{\ref{P1:cons4}}).\label{P4}
\end{align}
Note that based on Proposition 4.1, we only need to consider $\mv x^{\mathrm{ex}}$ with $e_1\cdot e_2 = 0$ and $w_1 \cdot w_2=0$ without loss of optimality.\footnote{{ For any given energy and spectrum cooperation scheme $\mv x^{\mathrm{ex}}$ with $e_1\cdot e_2 \neq 0$ or $w_1 \cdot w_2 \neq 0$, we can always trivially find an alternative scheme ${\mv x^{\mathrm{ex}}}' = [e_1' ,e_2' ,w_1' ,w_2']^T$ with $e_i' = e_i -\min(e_1,e_2)$ and $w_i' = w_i -\min(w_1,w_2)$ to achieve the same or smaller cost at both systems as compared to $\mv x^{\mathrm{ex}}$, i.e., $\bar C_i(\mv x^{\mathrm{ex}'}) \le C_i(\mv x^{\mathrm{ex}}), i=1,2$. Since $e_1'\cdot e_2'=0$ and $w_1'\cdot w_2' = 0$ always hold, it suffices to only consider $\mv x^{\mathrm{ex}}$ with $e_1\cdot e_2=0$ and $w_1\cdot w_2= 0$.}} The problem in (\ref{P4}) has a similar structure as problem (P2), which is a special case of (\ref{P4}) with $\mv{x}^{\mathrm{ex}}=\mv{0}$. Thus, we can obtain its optimal solution similarly as in Proposition \ref{prop_nocoop} and the details are omitted here. { We denote the optimal solution to problem (\ref{P4}) by $E_i^{(\mv{x}^{\mathrm{ex}})}, G_i^{(\mv{x}^{\mathrm{ex}})}, \{b_k^{(\mv{x}^{\mathrm{ex}})}\}$, and $\{p_k^{(\mv{x}^{\mathrm{ex}})}\}$ and the bandwidth water-level $\nu_i^{(x^{\mathrm{ex}})}$. Furthermore, let  the optimal dual solution associated with (\ref{P1:cons1}) and (\ref{P1:cons2}) be denoted  by $\lambda_i^{(\mv{x}^{\mathrm{ex}})}$ and $\mu_i^{(\mv{x}^{\mathrm{ex}})}$, respectively.}
Then, it follows that\footnote{As will be shown later, $\mu_i^{(\mv{x}^{\mathrm{ex}})}$ and $\lambda_i^{(\mv{x}^{\mathrm{ex}})}$ can be interpreted as the marginal costs with respect to the shared energy and bandwidth between two BSs, respectively. Therefore, the result in (\ref{P4dual:nocoop}) is intuitive, since the marginal cost should be the energy price of $\alpha_i^E$ if the renewable energy is excessive to support the energy consumption and energy exchange, while the marginal cost should be $\alpha_i^G$ if the renewable energy is insufficient.
}
 \begin{align}
& \mu_i^{(\mv{x}^{\mathrm{ex}})}=
\begin{cases}
\alpha_i^E, & \sum_{k\in\mathcal{K}_i}p^{(\mv{x}^{\mathrm{ex}})}_k+P_{c,i}-\beta_E e_{\bar{\imath}}+e_{i}\le\bar{E}_i  \\
\alpha_i^G, &\sum_{k\in\mathcal{K}_i}p^{(\mv{x}^{\mathrm{ex}})}_k+P_{c,i}-\beta_E e_{\bar{\imath}}+e_{i}>\bar{E}_i
\end{cases}.\label{P4dual:nocoop}\\
&{
\lambda_i^{(\mv{x}^{\mathrm{ex}})}=\nu_i^{(\bm{x}^{\mathrm{ex}})}\cdot \mu_i^{(\bm{x}^{\mathrm{ex}})}}\label{waterlevel}
\end{align}

It is easy to verify that $\bar{C}_i(\mv{x}^{\mathrm{ex}}),i\in\{1,2\}$, is a convex function of $\mv{x}^{\mathrm{ex}}$. Therefore,  under any given $\mv{x}^{\mathrm{ex}}$, two BSs can reduce their individual cost simultaneously if and only if there exists ${\mv{x}^{\mathrm{ex}}}' = \mv{x}^{\mathrm{ex}}+\Delta \mv{x}^{\mathrm{ex}}\neq \mv{x}^{\mathrm{ex}}$ with $\Delta\mv{x}^{\mathrm{ex}}=[\Delta e_1, \Delta e_2, \Delta w_1, \Delta w_2]^T$ sufficiently small and ${\mv{x}^{\mathrm{ex}}}' \geq \mv{0}$ and ${\mv{x}^{\mathrm{ex}}}' \neq \mv{0}$ such that $\bar{C}_i({\mv{x}^{\mathrm{ex}}}') <\bar{C}_i(\mv{x}^{\mathrm{ex}}),\forall i\in\{1,2\}$. In particular, by considering the non-cooperative benchmark system with $\mv{x}^{\mathrm{ex}}=\mv{0}$, it is inferred that partial cooperation is feasible if and only if there exists ${\mv{x}^{\mathrm{ex}}}' \geq \mv{0}$ and ${\mv{x}^{\mathrm{ex}}}' \neq \mv{0}$  such that $\bar{C}_i({\mv{x}^{\mathrm{ex}}}') < \bar{C}_i(\mv{0})$. Based on these observations, we are ready to investigate the conditions for partial cooperation by checking the existence of such ${\mv{x}^{\mathrm{ex}}}'$.
First, we derive BS $i$'s cost change  $\bar{C}_i({\mv{x}^{\mathrm{ex}}}') - \bar{C}_i(\mv{x}^{\mathrm{ex}})$ analytically when the energy and spectrum cooperation decision changes from any given $\mv{x}^{\mathrm{ex}}$ to ${\mv{x}^{\mathrm{ex}}}' = \mv{x}^{\mathrm{ex}}+\Delta \mv{x}^{\mathrm{ex}}$ with sufficiently small $\Delta\mv{x}^{\mathrm{ex}}$. We have the following proposition.
\begin{lemma}\label{proposition:5.1}
Under any given $\mv{x}^{\mathrm{ex}}$, BS $i$'s cost change by adjusting the energy and spectrum cooperation decisions is expressed as
\begin{align}
  \bar{C}_i(\mv{x}^{\mathrm{ex}}+\Delta \mv{x}^{\mathrm{ex}})-\bar{C}_{i}(\mv{x}^{\mathrm{ex}})=\nabla \bar{C}_i(\mv{x}^{\mathrm{ex}})^T\Delta\mv{x}^{\mathrm{ex}},\label{1storder}
\end{align}
where $\Delta\mv{x}^{\mathrm{ex}}$ is sufficiently small, $\mv{x}^{\mathrm{ex}}+\Delta\mv{x}^{\mathrm{ex}} \ge \mv{0}$, and
\begin{align}\label{partialderivatives}
\nabla \bar{C}_i(\mv{x}^{\mathrm{ex}})=\bigg[\frac{\partial \bar{C}_i(\mv{x}^{\mathrm{ex}})}{\partial e_1}, \frac{\partial \bar{C}_i(\mv{x}^{\mathrm{ex}})}{\partial e_2},\frac{\partial \bar{C}_i(\mv{x}^{\mathrm{ex}})}{\partial w_1}, \frac{\partial \bar{C}_i(\mv{x}^{\mathrm{ex}})}{\partial w_2}\bigg]^T.
\end{align}
Here, $\frac{\partial \bar{C}_i(\bm{x}^{\mathrm{ex}})}{\partial e_i} =\mu_i^{(\bm{x}^{\mathrm{ex}})},~
                             \frac{\partial \bar{C}_i(\bm{x}^{\mathrm{ex}})}{\partial e_{\bar{\imath}}} =-\beta_E\mu_i^{(\bm{x}^{\mathrm{ex}})}$, $\frac{\partial \bar{C}_i(\bm{x}^{\mathrm{ex}})}{\partial w_i}=\lambda_i^{(\bm{x}^{\mathrm{ex}})}$ and $\frac{\partial \bar{C}_i(\bm{x}^{\mathrm{ex}})}{\partial w_{\bar{\imath}}}=-\lambda_i^{(\bm{x}^{\mathrm{ex}})}$ can be interpreted as the marginal costs at BS $i$ with respect to the energy and spectrum cooperation decisions $e_i, e_{\bar{\imath}}, w_i$ and $w_{\bar{\imath}}$, respectively.
\end{lemma}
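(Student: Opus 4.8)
The plan is to treat $\bar{C}_i(\mv{x}^{\mathrm{ex}})$ as the optimal value (perturbation) function of the convex program in (\ref{P4}) and to read off its gradient from the optimal dual variables by standard convex sensitivity analysis. The first thing I would record is how the cooperation vector enters (\ref{P4}): $\mv{x}^{\mathrm{ex}}=[e_1,e_2,w_1,w_2]^T$ appears \emph{only} through the right-hand sides of the power constraint (\ref{P1:cons1}) and the bandwidth constraint (\ref{P1:cons3}), and it does so affinely. Concretely, (\ref{P1:cons1}) shifts the available energy by $u_i:=\beta_E e_{\bar{\imath}}-e_i$ and (\ref{P1:cons3}) shifts the available bandwidth by $v_i:=\beta_B w_{\bar{\imath}}-w_i$ (with $\beta_B=1$ in this section). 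Hence $\bar{C}_i$ is a convex function of the two scalar budgets $(u_i,v_i)$, and the map $\mv{x}^{\mathrm{ex}}\mapsto(u_i,v_i)$ is linear, which reduces the problem to a one-step perturbation analysis followed by a chain rule.

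Next I would invoke the sensitivity theorem for convex optimization \cite{Boyd04}: the partial derivative of the optimal value with respect to a constraint's right-hand side equals the negative of that constraint's optimal Lagrange multiplier. Applied here this gives $\partial\bar{C}_i/\partial u_i=-\mu_i^{(\mv{x}^{\mathrm{ex}})}$ and $\partial\bar{C}_i/\partial v_i=-\lambda_i^{(\mv{x}^{\mathrm{ex}})}$, where $\mu_i^{(\mv{x}^{\mathrm{ex}})}$ and $\lambda_i^{(\mv{x}^{\mathrm{ex}})}$ are exactly the shadow prices of the power and bandwidth constraints characterized in (\ref{P4dual:nocoop})–(\ref{waterlevel}). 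I would then propagate these to the four components of $\mv{x}^{\mathrm{ex}}$ using $\partial u_i/\partial e_i=-1$, $\partial u_i/\partial e_{\bar{\imath}}=\beta_E$, $\partial v_i/\partial w_i=-1$, and $\partial v_i/\partial w_{\bar{\imath}}=\beta_B=1$. Collecting terms yields $\partial\bar{C}_i/\partial e_i=\mu_i^{(\mv{x}^{\mathrm{ex}})}$, $\partial\bar{C}_i/\partial e_{\bar{\imath}}=-\beta_E\mu_i^{(\mv{x}^{\mathrm{ex}})}$, $\partial\bar{C}_i/\partial w_i=\lambda_i^{(\mv{x}^{\mathrm{ex}})}$, and $\partial\bar{C}_i/\partial w_{\bar{\imath}}=-\lambda_i^{(\mv{x}^{\mathrm{ex}})}$, which are precisely the entries of $\nabla\bar{C}_i(\mv{x}^{\mathrm{ex}})$ in (\ref{partialderivatives}); the first-order relation (\ref{1storder}) then follows for sufficiently small $\Delta\mv{x}^{\mathrm{ex}}$.

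The hard part will be justifying that $\bar{C}_i$ is genuinely \emph{differentiable} at $\mv{x}^{\mathrm{ex}}$, so that (\ref{1storder}) holds as an exact linear equality rather than merely a subgradient inequality. For a convex value function, differentiability is equivalent to uniqueness of the optimal dual solution, which I would establish by leaning on the strict convexity of the QoS constraints (\ref{P1:cons4}) already noted after Proposition \ref{prop_nocoop}: this forces the primal optimizer, and hence—through the KKT stationarity and waterfilling relations underlying (\ref{waterlevel})—the multipliers $\mu_i^{(\mv{x}^{\mathrm{ex}})}$ and $\lambda_i^{(\mv{x}^{\mathrm{ex}})}$, to be unique. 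A subtlety to handle with care is boundary behaviour: at the kink where $E_i$ meets its cap $\bar{E}_i$ (so that $\mu_i$ switches between $\alpha_i^E$ and $\alpha_i^G$ in (\ref{P4dual:nocoop})), or where a cooperation variable hits zero, the one-sided derivatives can disagree. I would therefore restrict $\Delta\mv{x}^{\mathrm{ex}}$ to be small enough to keep the active set unchanged—exactly what the ``sufficiently small'' qualifier provides—so that away from these isolated breakpoints the multipliers are locally constant, $\bar{C}_i$ is locally affine in $\mv{x}^{\mathrm{ex}}$, and (\ref{1storder}) holds with exact equality.
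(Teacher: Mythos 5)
Your overall strategy is the same as the paper's: read the partial derivatives of the optimal-value function $\bar{C}_i$ off the optimal dual variables of the convex program in (\ref{P4}) via sensitivity analysis, and push them through the affine dependence of the constraint right-hand sides on $(e_1,e_2,w_1,w_2)$. The chain-rule computation and the resulting four partial derivatives are correct and match (\ref{partialderivatives}).

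There is, however, a genuine gap in your differentiability argument, and it is internally inconsistent with your own closing caveat. You claim that strict convexity of the QoS constraints forces uniqueness of the primal optimizer and ``hence'' of the multipliers $\mu_i^{(\mv{x}^{\mathrm{ex}})}$ and $\lambda_i^{(\mv{x}^{\mathrm{ex}})}$. Primal uniqueness does not imply dual uniqueness here, and the paper's proof shows exactly where this fails: when $\sum_{k\in\mathcal{K}_i}p^{(\mv{x}^{\mathrm{ex}})}_k+P_{c,i}-\beta_E e_{\bar{\imath}}+e_{i}=\bar{E}_i$, the primal solution is still unique but $\mu_i^{(\mv{x}^{\mathrm{ex}})}$ can be any value in $[\alpha_i^E,\alpha_i^G]$, so $\bar{C}_i$ is not differentiable at that point no matter how small $\Delta\mv{x}^{\mathrm{ex}}$ is --- the kink sits at the given $\mv{x}^{\mathrm{ex}}$ itself and cannot be avoided by shrinking the perturbation. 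The paper handles this by an explicit two-case analysis: in the generic case the dual solution is unique and (\ref{1storder}) is a genuine first-order expansion, while at the boundary case it replaces each partial derivative by the appropriate one-sided derivative (e.g., $\partial\bar{C}_i/\partial e_i^{+}=\alpha_i^G$ versus $\partial\bar{C}_i/\partial e_i^{-}=\alpha_i^E$), chosen according to the sign of the corresponding component of $\Delta\mv{x}^{\mathrm{ex}}$. Your ``keep the active set unchanged'' remark gestures at this but does not carry it out, and it cannot be reconciled with the earlier uniqueness claim. A second, smaller point: $\bar{C}_i$ is not locally affine in $\mv{x}^{\mathrm{ex}}$, since $\lambda_i^{(\mv{x}^{\mathrm{ex}})}=\nu_i^{(\mv{x}^{\mathrm{ex}})}\mu_i^{(\mv{x}^{\mathrm{ex}})}$ varies continuously with the bandwidth budget through the water-level $\nu_i^{(\mv{x}^{\mathrm{ex}})}$; (\ref{1storder}) should be read as a first-order approximation valid for sufficiently small $\Delta\mv{x}^{\mathrm{ex}}$, not as an exact local identity.
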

\begin{proof}
See Appendix \ref{appendix:proof_lemma_5.1}.
\end{proof}

Next, based on Lemma \ref{proposition:5.1}, we obtain the conditions for which the two BSs' costs can be decreased at the same time under any given $\mv{x}^{\rm{ex}}$, by examining whether there exists sufficiently small $\Delta \mv{x}^{\mathrm{ex}}\neq \mv{0}$ with $\mv{x}^{\mathrm{ex}}+\Delta\mv{x}^{\mathrm{ex}} \ge \mv{0}$ such that $\nabla \bar{C}_i(\mv{x}^{\mathrm{ex}})^T\Delta\mv{x}^{\mathrm{ex}} < 0$ for both $i=1,2$.

 \begin{proposition}\label{theorem1}
For any given $\mv{x}^{\mathrm{ex}}$, the necessary and sufficient conditions that the two BSs' costs can be decreased at the same time are given as follows:
\begin{itemize}\itemsep0.4em
\item $\lambda_1^{(\bm{x}^{\mathrm{ex}})}/\mu_1^{(\bm{x}^{\mathrm{ex}})}>\lambda_2^{(\bm{x}^{\mathrm{ex}})}/(\mu_2^{(\bm{x}^{\mathrm{ex}})}\beta_E)$ or $\lambda_2^{(\bm{x}^{\mathrm{ex}})}/\mu_2^{(\bm{x}^{\mathrm{ex}})}>\lambda_1^{(\bm{x}^{\mathrm{ex}})}/(\mu_1^{(\bm{x}^{\mathrm{ex}})}\beta_E)$, if $e_1=e_2=0$;

\item $\lambda_1^{(\bm{x}^{\mathrm{ex}})}/\mu_1^{(\bm{x}^{\mathrm{ex}})}\neq\lambda_2^{(\bm{x}^{\mathrm{ex}})}/(\mu_2^{(\bm{x}^{\mathrm{ex}})}\beta_E)$, if $e_1>0$;

\item $\lambda_2^{(\bm{x}^{\mathrm{ex}})}/\mu_2^{(\bm{x}^{\mathrm{ex}})}\neq\lambda_1^{(\bm{x}^{\mathrm{ex}})}/(\mu_1^{(\bm{x}^{\mathrm{ex}})}\beta_E)$, if $e_2>0$.
\end{itemize}
 \end{proposition}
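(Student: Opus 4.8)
The plan is to recast the requirement ``both BSs' costs strictly decrease'' as a \emph{linear feasibility problem} in the perturbation $\Delta\mv{x}^{\mathrm{ex}}=[\Delta e_1,\Delta e_2,\Delta w_1,\Delta w_2]^T$ and to show it is solvable exactly under the three stated conditions. First I would apply Lemma \ref{proposition:5.1} to write the two cost changes, with $\mu_i$ and $\lambda_i$ denoting the marginal costs $\mu_i^{(\mv{x}^{\mathrm{ex}})},\lambda_i^{(\mv{x}^{\mathrm{ex}})}$, as
\begin{align}
\nabla\bar{C}_1^T\Delta\mv{x}^{\mathrm{ex}}&=\mu_1(\Delta e_1-\beta_E\Delta e_2)+\lambda_1(\Delta w_1-\Delta w_2),\nonumber\\
\nabla\bar{C}_2^T\Delta\mv{x}^{\mathrm{ex}}&=\mu_2(\Delta e_2-\beta_E\Delta e_1)+\lambda_2(\Delta w_2-\Delta w_1).\nonumber
\end{align}
Both costs decrease simultaneously if and only if there is a \emph{feasible} direction (i.e. $\mv{x}^{\mathrm{ex}}+\Delta\mv{x}^{\mathrm{ex}}\ge\mv{0}$, so any coordinate at which $\mv{x}^{\mathrm{ex}}$ vanishes must be perturbed nonnegatively, while a strictly positive coordinate may move in either direction for sufficiently small $\Delta\mv{x}^{\mathrm{ex}}$) making both forms negative. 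By Proposition \ref{optimalP1} I may assume $e_1e_2=0$ and $w_1w_2=0$, which is what produces the case split on $e_1,e_2$.

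The key simplification is to eliminate the spectrum variables. Introducing the net spectrum transfer $\Delta w:=\Delta w_2-\Delta w_1$, the bandwidth contributions collapse to $-\lambda_1\Delta w$ for BS~1 and $+\lambda_2\Delta w$ for BS~2. I would then observe that, because $w_1$ and $w_2$ are two independent one-sided nonnegative variables, $\Delta w$ can be given either sign by arbitrarily small feasible perturbations, irrespective of which (if any) of $w_1,w_2$ is active; hence $\Delta w$ is effectively a free scalar. Since $\lambda_i=\nu_i^{(\mv{x}^{\mathrm{ex}})}\mu_i>0$ by (\ref{waterlevel}), for fixed energy directions the two inequalities read $\mu_1(\Delta e_1-\beta_E\Delta e_2)<\lambda_1\Delta w$ and $\lambda_2\Delta w<-\mu_2(\Delta e_2-\beta_E\Delta e_1)$; a suitable $\Delta w$ exists if and only if the open interval for $\Delta w$ is nonempty, i.e. if and only if
\begin{align}
\tfrac{1}{s_1}(\Delta e_1-\beta_E\Delta e_2)+\tfrac{1}{s_2}(\Delta e_2-\beta_E\Delta e_1)<0,\nonumber
\end{align}
where $s_i:=\lambda_i/\mu_i$. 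This single scalar criterion absorbs all spectrum degrees of freedom.

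What remains is a case analysis on the energy directions $(\Delta e_1,\Delta e_2)$. Rewriting the left-hand side as $\frac{1}{s_1s_2}\big[(s_2-\beta_E s_1)\Delta e_1+(s_1-\beta_E s_2)\Delta e_2\big]$ and using $s_1,s_2>0$, I would argue as follows. When $e_1=e_2=0$ both $\Delta e_i\ge0$, so the expression can be made negative iff at least one coefficient is negative, i.e. $s_2<\beta_E s_1$ or $s_1<\beta_E s_2$, which is precisely $\lambda_1/\mu_1>\lambda_2/(\mu_2\beta_E)$ or $\lambda_2/\mu_2>\lambda_1/(\mu_1\beta_E)$. When $e_1>0$ (hence $e_2=0$), $\Delta e_1$ is free in sign while $\Delta e_2\ge0$, so negativity is achievable iff the coefficient of $\Delta e_1$ is nonzero, i.e. $s_2\neq\beta_E s_1$, which is $\lambda_1/\mu_1\neq\lambda_2/(\mu_2\beta_E)$; the case $e_2>0$ is symmetric.

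I expect the main obstacle to be the careful handling of both directions of the ``necessary and sufficient'' claim at the \emph{boundary} cases. Concretely, when a coefficient vanishes I must show no feasible direction works; this is where $0\le\beta_E\le1$ is essential, since if $s_2=\beta_E s_1$ then $s_1-\beta_E s_2=s_1(1-\beta_E^2)\ge0$, so the surviving term cannot be driven negative under the remaining one-sided constraint. I would also verify explicitly that a pure spectrum move or a pure energy move can never decrease both costs (the latter forces $\Delta e_1<\beta_E\Delta e_2<\beta_E^2\Delta e_1$, a contradiction), which is precisely what makes the cooperation necessarily bi-directional and the combined criterion nontrivial.
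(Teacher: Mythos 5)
Your argument is correct, and it reaches Proposition \ref{theorem1} by a genuinely different and in one respect more complete route than the paper's. The paper's Appendix D splits the problem into nine cases according to which of $e_1,e_2$ and which of $w_1,w_2$ are active, proves only the case $e_1=e_2=w_1=w_2=0$ in detail---exhibiting for sufficiency a perturbation with $\Delta e_1>0$, $\Delta w_2>0$ and $\mu_1^{(\mv{x}^{\mathrm{ex}})}/\lambda_1^{(\mv{x}^{\mathrm{ex}})}<\Delta w_2/\Delta e_1<\beta_E\mu_2^{(\mv{x}^{\mathrm{ex}})}/\lambda_2^{(\mv{x}^{\mathrm{ex}})}$ and a matching contradiction for necessity---and omits the remaining eight cases ``for brevity.'' You instead eliminate the spectrum variables outright: since $\Delta w=\Delta w_2-\Delta w_1$ can be given either sign by a feasible perturbation no matter which of $w_1,w_2$ is active, the two descent inequalities reduce to a nonempty-open-interval condition on a single free scalar, which collapses to the sign of $(s_2-\beta_E s_1)\Delta e_1+(s_1-\beta_E s_2)\Delta e_2$ with $s_i=\lambda_i^{(\mv{x}^{\mathrm{ex}})}/\mu_i^{(\mv{x}^{\mathrm{ex}})}>0$. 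This explains structurally why the stated conditions depend only on the energy state, treats all nine of the paper's cases uniformly (so it actually supplies the eight proofs the paper omits), and your boundary observation that $s_2=\beta_E s_1$ forces $s_1-\beta_E s_2=s_1(1-\beta_E^2)\ge 0$ is precisely what closes the necessity direction in the $e_i>0$ cases. The underlying computation agrees with the paper's (your interval test specializes to the paper's ratio test when $\Delta e_2=\Delta w_1=0$), and both arguments inherit the same unresolved caveat from Lemma \ref{proposition:5.1} about non-uniqueness of $\mu_i^{(\mv{x}^{\mathrm{ex}})}$ at the kink where the renewable supply is exactly exhausted. Two cosmetic points: your closing aside that a pure energy move forces $\Delta e_1<\beta_E\Delta e_2<\beta_E^2\Delta e_1$, ``a contradiction,'' is only immediate for $\beta_E=1$; for $\beta_E<1$ it forces $\Delta e_1<0$ and $\Delta e_2<0$, which contradicts feasibility via $e_1\cdot e_2=0$ rather than directly. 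Also, the reduction to $e_1\cdot e_2=0$, $w_1\cdot w_2=0$ comes from the footnoted remark preceding (\ref{P4}), not from Proposition \ref{optimalP1} itself. Neither affects the validity of the main argument.
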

 \begin{proof}
See Appendix \ref{appendix:proof2}.
 \end{proof}

 \begin{remark}\label{theorem:remark}
Proposition \ref{theorem1} can be intuitively explained as follows by taking $\lambda_1^{(\bm{x}^{\mathrm{ex}})}/\mu_1^{(\bm{x}^{\mathrm{ex}})}>\lambda_2^{(\bm{x}^{\mathrm{ex}})}/(\mu_2^{(\bm{x}^{\mathrm{ex}})}\beta_E)$ when $e_1=e_2=0$ as an example. Other cases can be understood by similar observations. When $e_1=e_2=0$, this condition of $\lambda_1^{(\bm{x}^{\mathrm{ex}})}/\mu_1^{(\bm{x}^{\mathrm{ex}})}>\lambda_2^{(\bm{x}^{\mathrm{ex}})}/(\mu_2^{(\bm{x}^{\mathrm{ex}})}\beta_E)$ implies that we can always find $\Delta\mv{x}^{\mathrm{ex}}=[\Delta e_1, \Delta e_2, \Delta w_1, \Delta w_2]^T$ sufficiently small with $\Delta e_1>0, \Delta e_2=0, \Delta w_1=0$ and $\Delta w_2>0$ such that $\lambda_1^{(\bm{x}^{\mathrm{ex}})}/\mu_1^{(\bm{x}^{\mathrm{ex}})}>\Delta e_1/\Delta w_2>\lambda_2^{(\bm{x}^{\mathrm{ex}})}/(\mu_2^{(\bm{x}^{\mathrm{ex}})}\beta_E)$. In other words, there exists a new joint energy and spectrum cooperation scheme for the  costs of both systems to be reduced at the same time, i.e., $\nabla \bar{C}_1(\mv{x}^{\mathrm{ex}})^T\Delta\mv{x}^{\mathrm{ex}}={\mu_1^{(\bm{x}^{\mathrm{ex}})}}{\Delta e_1}-{\lambda_1^{(\bm{x}^{\mathrm{ex}})}}{\Delta w_2}<0$ and $\nabla \bar{C}_2(\mv{x}^{\mathrm{ex}})^T\Delta\mv{x}^{\mathrm{ex}}={\lambda_2^{(\bm{x}^{\mathrm{ex}})}}{\Delta w_2}-{\mu_2^{(\bm{x}^{\mathrm{ex}})}}\beta_E{\Delta e_1} <0 $. By using the marginal cost interpretation in Proposition \ref{proposition:5.1},  the costs at both BSs can be further reduced by transferring  ${\Delta e_1}$ amount of energy from BS 1 to BS 2 and sharing ${\Delta w_2}$ amount of spectrum from BS 2 to BS 1.\end{remark}

Finally, we can characterize the conditions for partial cooperation by examining $\mv{x}^{\mathrm{ex}} = \mv{0}$ in Proposition \ref{theorem1}. We explicitly give the conditions as follows.

\begin{corollary}\label{condition}
Partial cooperation is feasible if and only if $\lambda_1^{(\bm{0})}/\mu_1^{(\bm{0})}>\lambda_2^{(\bm{0})}/(\mu_2^{(\bm{0})}\beta_E)$  or  $\lambda_2^{(\bm{0})}/\mu_2^{(\bm{0})}>\lambda_1^{(\bm{0})}/(\mu_1^{(\bm{0})}\beta_E)$.
\end{corollary}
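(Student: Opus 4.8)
The plan is to read off this corollary as the specialization of Proposition~\ref{theorem1} to the non-cooperative base point $\mv{x}^{\mathrm{ex}}=\mv{0}$, with the convexity of $\bar{C}_i$ supplying the bridge between the global notion of feasibility and the infinitesimal (first-order) condition characterized in that proposition.

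First I would invoke the definition already fixed in the text: partial cooperation is feasible exactly when there is some ${\mv{x}^{\mathrm{ex}}}'\ge\mv{0}$, ${\mv{x}^{\mathrm{ex}}}'\neq\mv{0}$, with $\bar{C}_i({\mv{x}^{\mathrm{ex}}}')<\bar{C}_i(\mv{0})$ for both $i$. Because each $\bar{C}_i$ is convex in $\mv{x}^{\mathrm{ex}}$, this global statement is equivalent to the existence of a sufficiently small admissible direction that strictly lowers both costs, i.e.\ exactly the event whose necessary and sufficient conditions Proposition~\ref{theorem1} records. For the nontrivial direction, given a global ${\mv{x}^{\mathrm{ex}}}'$ I would use convexity to write $\bar{C}_i(t{\mv{x}^{\mathrm{ex}}}')\le(1-t)\bar{C}_i(\mv{0})+t\bar{C}_i({\mv{x}^{\mathrm{ex}}}')<\bar{C}_i(\mv{0})$ for every $t\in(0,1]$, so that $\Delta\mv{x}^{\mathrm{ex}}=t{\mv{x}^{\mathrm{ex}}}'$ with $t$ small is an admissible infinitesimal perturbation decreasing both $\bar{C}_i$; the reverse direction is immediate, since any such small perturbation is itself an admissible ${\mv{x}^{\mathrm{ex}}}'$. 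This is precisely the convexity reasoning already stated just before Lemma~\ref{proposition:5.1}.

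Second, I would evaluate Proposition~\ref{theorem1} at $\mv{x}^{\mathrm{ex}}=\mv{0}$. Since $\mv{0}$ has $e_1=e_2=0$, only the first of the three regimes applies, and its condition reads $\lambda_1^{(\bm{0})}/\mu_1^{(\bm{0})}>\lambda_2^{(\bm{0})}/(\mu_2^{(\bm{0})}\beta_E)$ or $\lambda_2^{(\bm{0})}/\mu_2^{(\bm{0})}>\lambda_1^{(\bm{0})}/(\mu_1^{(\bm{0})}\beta_E)$, which is exactly the claimed criterion; chaining it with the equivalence of the previous step closes the argument. No serious obstacle is expected here, as the result is essentially a direct corollary; the only points needing care are confirming that $\mv{x}^{\mathrm{ex}}=\mv{0}$ lands in the $e_1=e_2=0$ branch of Proposition~\ref{theorem1} and that the marginal-cost quantities $\lambda_i^{(\bm{0})},\mu_i^{(\bm{0})}$ are the dual variables of the non-cooperative optimum from Proposition~\ref{prop_nocoop}, evaluated through~(\ref{P4dual:nocoop}) and~(\ref{waterlevel}).
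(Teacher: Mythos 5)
Your proposal is correct and follows essentially the same route as the paper: the paper likewise obtains the corollary by specializing Proposition~\ref{theorem1} to $\mv{x}^{\mathrm{ex}}=\mv{0}$ (which lands in the $e_1=e_2=0$ branch), relying on the convexity of $\bar{C}_i$ — asserted in the text just before Lemma~\ref{proposition:5.1} — to equate global feasibility of partial cooperation with the existence of a small cost-decreasing perturbation. Your explicit line-segment argument $\bar{C}_i(t{\mv{x}^{\mathrm{ex}}}')\le(1-t)\bar{C}_i(\mv{0})+t\bar{C}_i({\mv{x}^{\mathrm{ex}}}')<\bar{C}_i(\mv{0})$ merely spells out the step the paper leaves as "easy to verify."
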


 Corollary \ref{condition} is implied by Proposition \ref{theorem1}. More intuitively, under the condition of $\lambda_1^{(\bm{0})}/\mu_1^{(\bm{0})}>\lambda_2^{(\bm{0})}/(\mu_2^{(\bm{0})}\beta_E)$, it follows from Remark \ref{theorem:remark} that BS 1 is more spectrum-hungry than BS 2, while BS 2 is more insufficient of energy than BS 1.  Hence, the costs at both BS can be reduced at the same time by BS 1 transferring spectrum to BS 2 and BS 2 transferring energy to BS 1. Similarly, if $\lambda_2^{(\bm{0})}/\mu_2^{(\bm{0})}>\lambda_1^{(\bm{0})}/(\mu_1^{(\bm{0})}\beta_E)$, the opposite is true. This shows that partial cooperation is only feasible when two systems find inter-system complementarity in energy and spectrum resources.

\begin{figure*}
        \centering
        \begin{subfigure}[b]{0.5\textwidth}
                \centering
                \includegraphics[width=8cm]{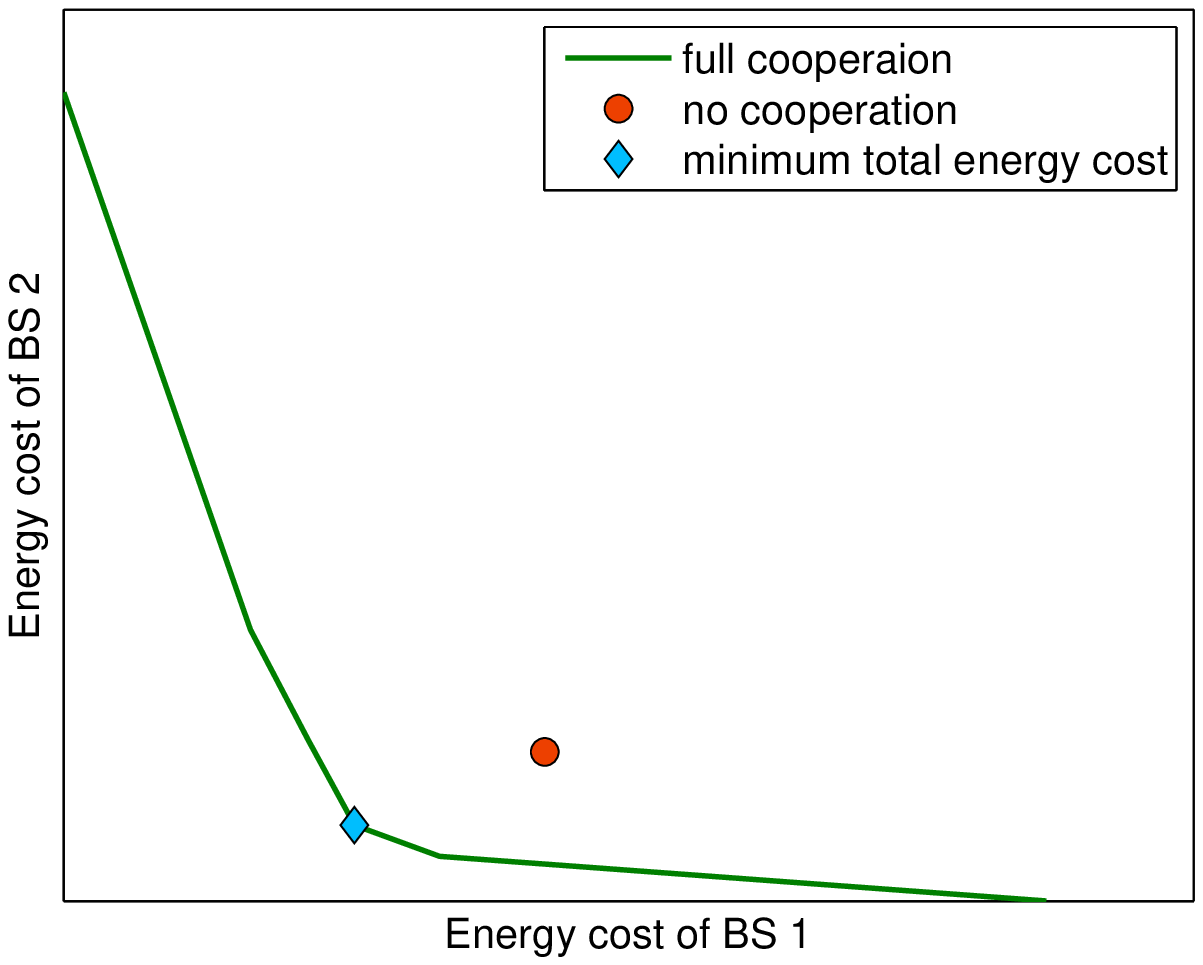}
                \caption{Partial Cooperation Feasible Scenario.}
             \label{Fully_Cooperative1}
        \end{subfigure}%
        \begin{subfigure}[b]{0.5\textwidth}
                \centering
                \includegraphics[width=8cm]{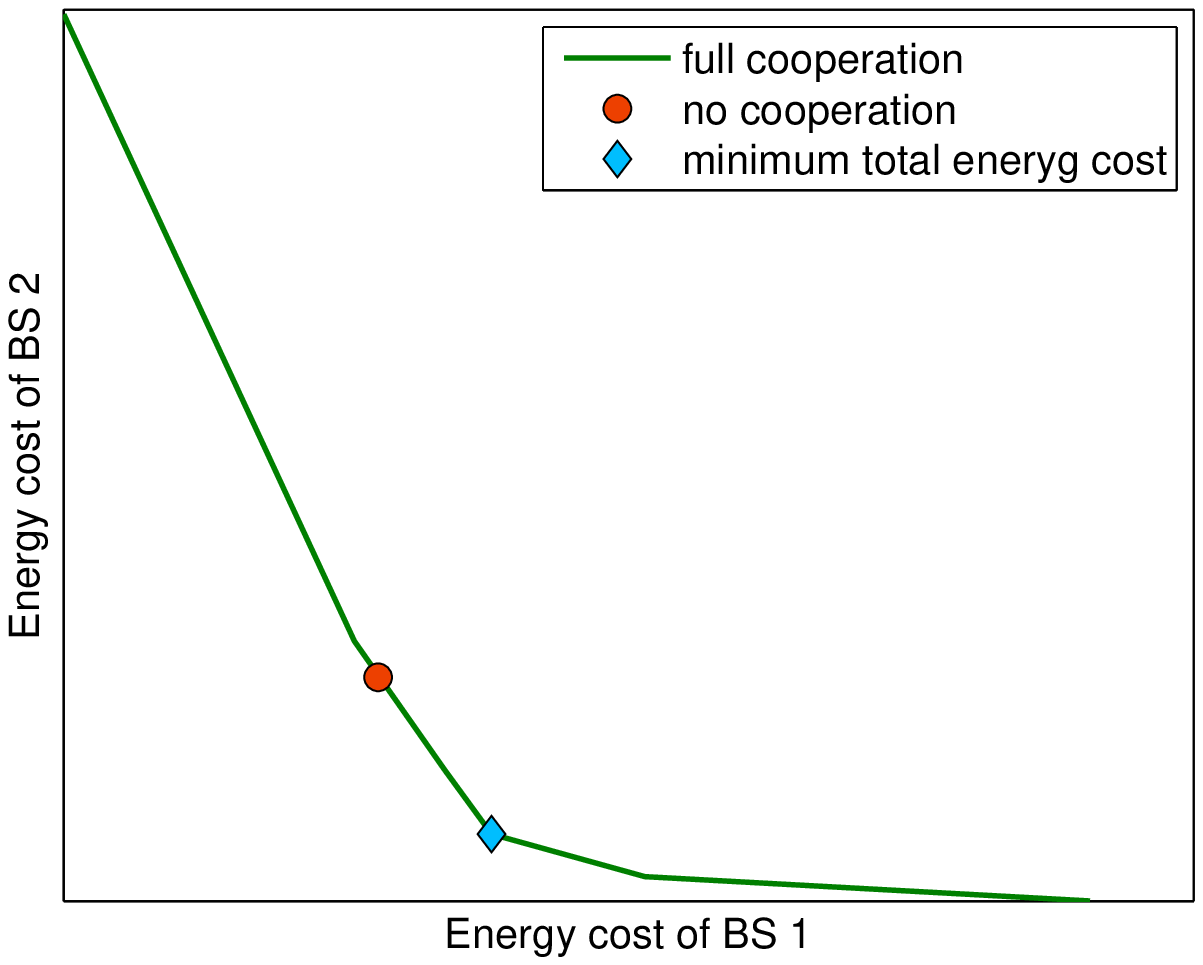}
                \caption{Partial Cooperation Infeasible Scenario.}
                \label{Fully_Cooperative2}
        \end{subfigure}
        \caption{Two different scenarios with joint energy and spectrum cooperation.}\label{fig:cooperation}
\end{figure*}

\begin{example}
We provide an example in Fig. \ref{fig:cooperation} to illustrate partial cooperation conditions  in Corollary \ref{condition}. We plot the Pareto boundary achieved by full cooperation, non-cooperation benchmark and the point corresponding to the minimum total cost (full cooperation with $\gamma_1=\gamma_2=1$). The joint energy and spectrum cooperation results in two scenarios as shown in Figs. \ref{Fully_Cooperative1} and \ref{Fully_Cooperative2}, which correspond to cases where partial cooperation is feasible and infeasible, respectively.
\begin{itemize}
\item Fig. \ref{Fully_Cooperative1} shows the feasible partial cooperation scenario, in which the partial cooperation conditions are satisfied. In this scenario, the non-cooperative benchmark is observed to lie within the Pareto boundary of cost region. As a result, from the non-cooperative benchmark, the costs of both BSs can be reduced at the same time until reaching the Pareto boundary.

\item Fig. \ref{Fully_Cooperative2} shows the scenario when the partial cooperation conditions are not satisfied,  where the non-cooperation benchmark is observed to lie on the Pareto boundary. From this result, it is evident that the two BSs' costs cannot be reduced at the same time. That is, the partial cooperation is infeasible.

\item In both scenarios of Figs. \ref{Fully_Cooperative1} and \ref{Fully_Cooperative2}, it is observed that the minimum total cost point differs from the non-cooperation benchmark. This shows that full cooperation can decrease the total cost at two BSs from the non-cooperative benchmark even when partial cooperation is infeasible, which can be realized by uni-directional cooperation (e.g., in Fig. \ref{Fully_Cooperative2}).

    \end{itemize}
 The results in this example motivate us to propose distributed algorithms for the partial cooperation scenario to reduce two BSs' costs from non-cooperative benchmark to Pareto optimality, as will be discussed next.
\end{example}

\subsection{Distributed Algorithm}
In this subsection, we design a distributed algorithm to implement the energy and spectrum cooperation for two partially cooperative systems (satisfying Corollary \ref{condition}).\footnote{As long as each system agrees to install the algorithm to benefit from its efficiency and fairness, the system will not make any deviation in its decisions as the algorithm runs automatically.} Since the two systems are selfish, we need to ensure that they can  improve their performance fairly. We design our algorithm based on the proportionally fair cost reduction, which is defined as follows.
\begin{definition}\label{definition}
Proportional fair cost reduction is achieved by both systems
 if, for the resultant cost tuple $(\tilde{C}_1, \tilde{C}_2)$, the cost reduction ratio between two BSs equals the  ratio of their costs in the non-cooperative scenario, i.e.,
\begin{align}
 \frac{\bar C_1(\bm{0})-\tilde{C}_1}{\bar C_2(\bm{0})-\tilde{C}_2}=\frac{\bar C_1(\bm{0})}{\bar C_2(\bm{0})}.
\end{align}
\end{definition}

Next, we proceed to elaborate on the key issue of the update of the energy and spectrum cooperation decision vector $\mv{x}^{\mathrm{ex}}$ to have proportionally fair cost reductions.  Our algorithm begins with the non-cooperative benchmark (i.e., $\mv{x}^{\mathrm{ex}}=\mv{0}$). Then, the inter-system energy and spectrum cooperation is adjusted to decrease the costs at both BSs in each iteration. Specifically, under any given $\mv{x}^{\mathrm{ex}}$, if the conditions in Proposition \ref{theorem1} are satisfied, then the two BSs cooperate by updating their energy and spectrum cooperation decision vector according to\begin{align}
  {\mv{x}^{\mathrm{ex}}}'=\mv{x}^{\mathrm{ex}}+\delta\mv{d}\label{update},
\end{align}
where $\delta>0$ is a sufficiently small step size and  $\mv{d}\in\mathbb{R}^4$ is the direction of the update that satisfies $\nabla \bar{C}_i(\mv{x}^{\mathrm{ex}})^T\mv{d}<0$ (cf. (\ref{1storder})). It can be observed that there are multiple solutions satisfying this condition. Here, we choose $\mv{d}$ in each iteration as follows:
\begin{itemize}
  \item If $\lambda_1^{(\bm{0})}\mu_2^{(\bm{0})}\beta_E>\lambda_2^{(\bm{0})}\mu_1^{(\bm{0})}$ holds, which means that costs of both systems can be reduced by system 1 sharing energy to system 2 and system 2 sharing spectrum to system 1 (cf. Corollary \ref{condition}), then we choose \begin{align}\label{determination:d1}\mv{d}=&
      \mathrm{sign}\left(\lambda_1^{(\mv{x}^{\mathrm{ex}})}\mu_2^{(\mv{x}^{\mathrm{ex}})}\beta_E-\lambda_2^{(\mv{x}^{\mathrm{ex}})}\mu_1^{(\mv{x}^{\mathrm{ex}})}\right)\cdot \nonumber\\
      &\left[\rho\lambda_2^{(\bm{x}^{\mathrm{ex}})}+\lambda_1^{(\bm{x}^{\mathrm{ex}})}, 0,   0, \mu_1^{(\bm{x}^{\mathrm{ex}})}+\rho\beta_E\mu_2^{(\bm{x}^{\mathrm{ex}})} \right]^T.\end{align}
  \item  If $\lambda_2^{(\bm{0})}\mu_1^{(\bm{0})}\beta_E>\lambda_1^{(\bm{0})}\mu_2^{(\bm{0})}$ holds, which means that  costs of both systems can be reduced by system 1 sharing spectrum to system 2 and system 2 sharing energy to system 1 (cf. Corollary \ref{condition}), then we choose
  \begin{align}\label{determination:d2}\mv{d}=&\mathrm{sign}\left(\lambda_2^{(\bm{x}^{\mathrm{ex}})}\mu_1^{(\bm{x}^{\mathrm{ex}})}\beta_E-\lambda_1^{(\bm{x}^{\mathrm{ex}})}\mu_2^{(\bm{x}^{\mathrm{ex}})}\right)\cdot \nonumber\\
  &\times \left[0,\rho\lambda_2^{(\bm{x}^{\mathrm{ex}})}+\lambda_1^{(\bm{x}^{\mathrm{ex}})}, \mu_1^{(\bm{x}^{\mathrm{ex}})}+\rho\beta_E\mu_2^{(\bm{x}^{\mathrm{ex}})},0 \right]^T. \end{align}
\end{itemize}
Here,  $\mathrm{sign}(x)=1$ if $x\geq0$ and $\mathrm{sign}(x)=-1$ if $x<0$, and $\rho$ is a factor controlling the ratio of cost reduction at both BSs in each update.
With the choice of $\mv{d}$ as shown above, the decrease of cost for each BS in each update is
  \begin{align}\label{eqn:ratio}
   \left[
 \begin{array}{c}
\bar{C}_{1}({\mv{x}^{\mathrm{ex}}}') - \bar{C}_{1}(\mv{x}^{\mathrm{ex}})\\
\bar{C}_{2}({\mv{x}^{\mathrm{ex}}}') - \bar{C}_{2}(\mv{x}^{\mathrm{ex}})
 \end{array}
 \right]=
 \left[
 \begin{array}{c}
   \nabla \bar{C}_{1}(\mv{x}^{\mathrm{ex}})^T\mv{d}\\
   \nabla \bar{C}_2(\mv{x}^{\mathrm{ex}})^T\mv{d}
 \end{array}
 \right]=\sigma
  \left[
 \begin{array}{c}
   \rho\\
   1
 \end{array}
 \right],
  \end{align}
 where $\sigma \leq 0$ is obtained by substituting (\ref{update}) into (\ref{1storder}), given by
\begin{align}
   \sigma=
\begin{cases}
-(\lambda_1^{(\mv{x}^{\mathrm{ex}})}\mu_2^{(\mv{x}^{\mathrm{ex}})}\beta_E-\lambda_2^{(\mv{x}^{\mathrm{ex}})}\mu_1^{(\mv{x}^{\mathrm{ex}})}),\nonumber\\ ~~~~~~~~~~~~~~~~~~~~~~~~~~\lambda_1^{(\mv{x}^{\mathrm{ex}})}\mu_2^{(\mv{x}^{\mathrm{ex}})}\beta_E\geq\lambda_2^{(\mv{x}^{\mathrm{ex}})}\mu_1^{(\mv{x}^{\mathrm{ex}})} \\
-(\lambda_2^{(\mv{x}^{\mathrm{ex}})}\mu_1^{(\mv{x}^{\mathrm{ex}})}\beta_E-\lambda_1^{(\mv{x}^{\mathrm{ex}})}\mu_2^{(\mv{x}^{\mathrm{ex}})}),\nonumber\\
~~~~~~~~~~~~~~~~~~~~~~~~~~\lambda_2^{(\mv{x}^{\mathrm{ex}})}\mu_1^{(\mv{x}^{\mathrm{ex}})}\beta_E\geq\lambda_1^{(\mv{x}^{\mathrm{ex}})}\mu_2^{(\mv{x}^{\mathrm{ex}})} \\
0,~~~~~~~~~~~~~~~~~~~~~~~~\mathrm{otherwise}
\end{cases}.\nonumber
\end{align}
From (\ref{eqn:ratio}), it follows that the cost reduction in each iteration satisfies $\rho=\frac{\bar{C}_1({\mv{x}^{\mathrm{ex}}})-\bar{C}_1({\mv{x}^{\mathrm{ex}}}')}{\bar{C}_2({\mv{x}^{\mathrm{ex}}})-\bar{C}_2({\mv{x}^{\mathrm{ex}}}')}$. Using this fact together with the proportional fairness criterion in Definition 5.1,  $\rho$ is determined as
\begin{align}
\rho=\frac{\bar{C}_1(\bf{0})}{\bar{C}_2(\bf{0})}\label{rho}.
\end{align}

 \begin{remark}
 Generally, it follows from  (\ref{eqn:ratio}) that $\rho$ controls the ratio of cost reduction at the two BSs. Besides the proportionally fair choice of $\rho$ in  (\ref{eqn:ratio}), we can set other values of $\rho>1$ (or $\rho<1$) to ensure that a larger (or smaller) cost decrease is achieved for BS 1 compared to BS 2 (provided that the step size $\delta$ is sufficiently small). By exhausting $\rho$ from zero to infinity, we can achieve all points on the Pareto boundary that have lower costs at both BSs than the non-cooperative benchmark.
\end{remark}\label{importantremark2}

\begin{table}[htp]
\begin{center}
\caption{Distributed Algorithm for Partial Cooperation}
 \hrule\vspace{0.2cm} \textbf{Algorithm I}  \vspace{0.2cm}
\hrule \vspace{0.2cm}
\begin{itemize}
\item[a)] Each BS $i\in\{1,2\}$ initializes from the non-cooperative benchmark by setting $e_i = w_i = 0$ (i.e., $\bm{x}^{\mathrm{ex}}=\mv{0}$). Each BS $i$ solves the problem in (\ref{P4}) for obtaining $\lambda_i^{(\bm{0})}$ and $\mu_i^{(\bm{0})}$, and sends them to the other BS $\bar \iota$.

  \item[b)] Each BS $i\in\{1,2\}$ tests the conditions in Corollary \ref{condition}. If    $\lambda_1^{(\bm{0})}\mu_2^{(\bm{0})}\beta_E>\lambda_2^{(\bm{0})}\mu_1^{(\bm{0})}$, then choose $\mv{d}$ in (\ref{determination:d1}) as the the update vector in the following iterations. If      $\lambda_2^{(\bm{0})}\mu_1^{(\bm{0})}\beta_E>\lambda_1^{(\bm{0})}\mu_2^{(\bm{0})}$,  then choose $\mv{d}$ in (\ref{determination:d2}). Otherwise, the algorithm ends. Sets $\rho$ as  in (\ref{rho}).
\item[c)] {\bf Repeat:}
    \begin{itemize}
    \item[1)] Each BS $i\in\{1,2\}$ computes the dual variables $\lambda_i^{(\bm{x}^{\mathrm{ex}})}$ and $\mu_i^{(\bm{x}^{\mathrm{ex}})}$ by solving the problem in (\ref{P4}), and sends them to the other BS $\bar{\imath}$;
    \item[2)] BS $i\in\{1,2\}$ updates the energy and spectrum cooperation vector as ${\mv{x}^{ex}}'={\mv{x}^{ex}}+\delta\mv{d}$;
    \item[3)] ${\mv{x}^{ex}} \gets {\mv{x}^{ex}}'$.
    \end{itemize}
\item[d)] {\bf Until} the conditions in Proposition \ref{theorem1} are satisfied.
\end{itemize}
\vspace{0.2cm} \hrule \label{algorithm:2}
\end{center}
\end{table}

To summarize, the distributed algorithm for partial cooperation is presented in Table \ref{algorithm:2} as Algorithm I and is described as follows. Initially, each BS  $i\in\{1,2\}$ begins from the non-cooperation benchmark case with $\mv{x}^{\mathrm{ex}}=\mv{0}$ and determines the update vector $\mv{d}$ that will be used in each iterations. Specifically, each BS computes $\lambda_i^{(\bm{0})}$ and $\mu_i^{(\bm{0})}$, and shares them with each other. If $\lambda_1^{(\bm{0})}\mu_2^{(\bm{0})}\beta_E>\lambda_2^{(\bm{0})}\mu_1^{(\bm{0})}$, then choose $\mv{d}$ in (\ref{determination:d1}) as the update vector; while if $\lambda_2^{(\bm{0})}\mu_1^{(\bm{0})}\beta_E>\lambda_1^{(\bm{0})}\mu_2^{(\bm{0})}$, then choose $\mv{d}$ in (\ref{determination:d2}). We set the cost reduction ratio as in (\ref{rho}). Then, the following procedures are implemented iteratively. In each iteration, according to the current energy and spectrum cooperation vector $\mv{x}^{\mathrm{ex}}$, each BS computes the dual variables $\lambda_i^{(\bm{x}^{\mathrm{ex}})}$ and $\mu_i^{(\bm{x}^{\mathrm{ex}})}$ by solving the problem in (\ref{P4}) and sends them to the other BS.  After exchanging the dual variables, the two BSs examine the conditions in Proposition \ref{theorem1} individually. If the conditions are satisfied, then each BS updates the cooperation scheme $\mv{x}^{\mathrm{ex}}$ according to (\ref{update}). The procedure shall proceed until the two BSs cannot decrease their costs at the same time, i.e., conditions in Proposition \ref{theorem1} are not satisfied. Due to the fact that the algorithm can guarantee the costs to decrease proportionally fair at each iteration and the Pareto optimal costs are bounded, the algorithm can always converge to a Pareto optimal point with proportional fairness provided that the step size $\delta$ is sufficiently small. {  Note that Algorithm I minimizes both systems' costs simultaneously based on the gradients of two convex cost functions in (\ref{P4}), which differs from the conventional gradient descent method in convex optimization which minimizes a single convex objective \cite{Boyd04}.
}

Compared to the centralized joint energy and spectrum cooperation scheme, which requires a central unit to gather all channel and energy information at two systems, the distributed algorithm only needs the exchange of {  four scalers (i.e. the marginal spectrum and energy prices $\lambda_i^{(\bm{x}^{\mathrm{ex}})}$ and $\mu_i^{(\bm{x}^{\mathrm{ex}})}, \forall i\in\{1,2\} )$} between two BSs in each iteration. As a result, such distributed algorithm can preserve the two systems' privacy and greatly reduce the cooperation complexity (e.g., signaling overhead).

\section{Numerical Results}\label{sec:numerical_examples}

\begin{figure}[t]
  \centering
  \includegraphics[width=8cm]{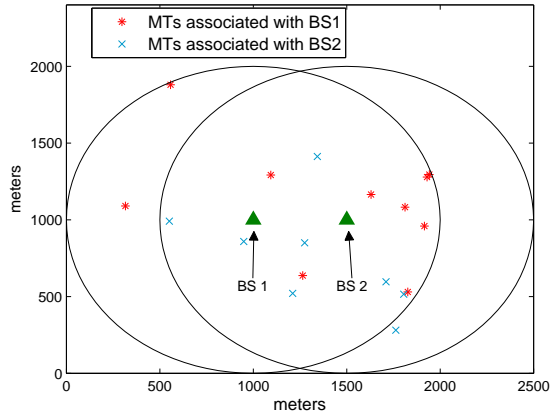}\\
  \caption{Simulation Setup.}\label{simu_setup}
\end{figure}
In this section, we provide numerical results for evaluating the performance of our proposed joint energy and spectrum cooperation. For the simulation setup, we assume that BS 1 and BS 2 each covers a circular area with a radius of 500 meters (m) as shown in Fig. \ref{simu_setup}. $K_1=10$ and $K_2=8$ MTs are randomly generated in the two cells. We consider a simplified path loss model for the wireless channel with the channel gain set as $g_k=c_0(\frac{d_k}{d_0})^{-\zeta}$, where $c_0=-60$dB is a constant path loss at the reference distance $d_0=10$ m, $d_k$ is the distance between MT $k$ and its associated BS in meter and $\zeta=3$ is the path loss exponent. The noise PSD at each MT is set as $N_0=-150$ dBm/Hz. Furthermore, we set the non-transmission power consumption for the BSs as $P_{c,1}=P_{c,2}=100$ Watts(W). The maximum usable renewable energy at the two BSs are $\bar E_1=190$ W and $\bar E_2=130$ W, respectively. We set the energy price from renewable utility firm and power grid as $\alpha_{i}^E=0.2$/W and $\alpha_{i}^G=1$/W, respectively, where the price unit is normalized for simplicity. The bandwidth for the two BSs are  $W_1=15$ MHz and $W_2=20$ MHz, respectively.

\begin{figure}[t]
  \centering
  \includegraphics[width=9cm]{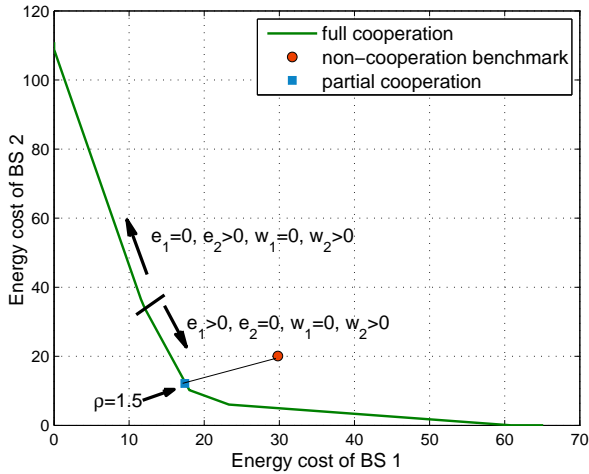}\\
  \caption{Energy cost region for the case of joint energy and spectrum cooperation versus the case without energy or spectrum cooperation.}\label{fullcoop}
\end{figure}

Fig. \ref{fullcoop} shows the BSs' optimized costs by the proposed joint energy and spectrum cooperation in full cooperation with $\beta_B=1$ and $\beta_E=0.8$ compared with the  non-cooperation benchmark. Notice that Fig. \ref{fullcoop} only shows a cooperation case at a time slot that BS 2 has relatively more bandwidth (considering its realized traffic load) than BS 1 and the spectrum cooperation is from BS 2 to BS 1. Yet, in other time slots two different BSs' traffic loads and channel realizations can change and their spectrum cooperation may follow a different direction.  It is observed that the non-cooperation benchmark lies within the Pareto boundary achieved by full cooperation, while the partial cooperation lies on that Pareto boundary. This indicates the benefit of joint energy and spectrum cooperation in minimizing the two systems' costs. It is also observed that the Pareto boundary in full cooperation is achieved by either uni-directional cooperation with BS 2 transferring both energy and spectrum to BS 1 (i.e., $e_1=0, e_2>0$ and $w_1=0, w_2>0$) or bi-directional cooperation with BS 1 transferring energy to BS 2 and BS 2 transferring spectrum to BS 1 (i.e., $e_1>0, e_2=0$ and $w_1=0, w_2>0$). Specifically, the energy costs at both BSs are decreased simultaneously  compared to the non-cooperation benchmark only in the case of bidirectional cooperation. This is intuitive, since otherwise the cost of the  BS that shares both resources will increase. Furthermore, for our proposed distributed algorithm, it is observed that it converges to the proportional fair result with $\rho=1.5$, which lies on the Pareto boundary in this region. This is also expected, since partial cooperation is only feasible when both systems find complementarity between energy and spectrum resources.

\begin{figure}[t]
  \centering
  \includegraphics[width=8cm]{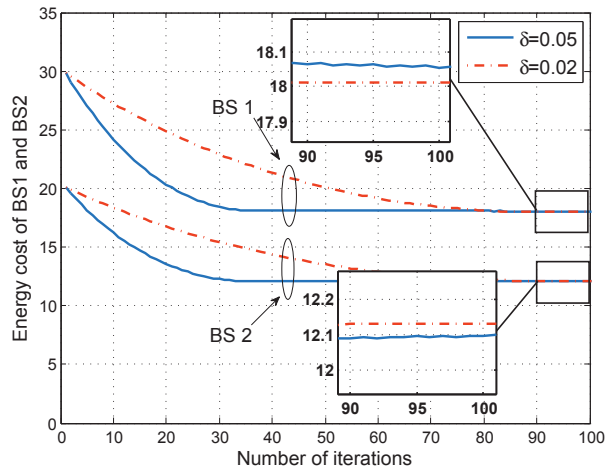}\\
  \caption{Convergence of the distributed algorithm under different step-size $\delta$'s.}\label{delta}
\end{figure}
\begin{figure}[t]
  \centering
  \includegraphics[width=8.5cm]{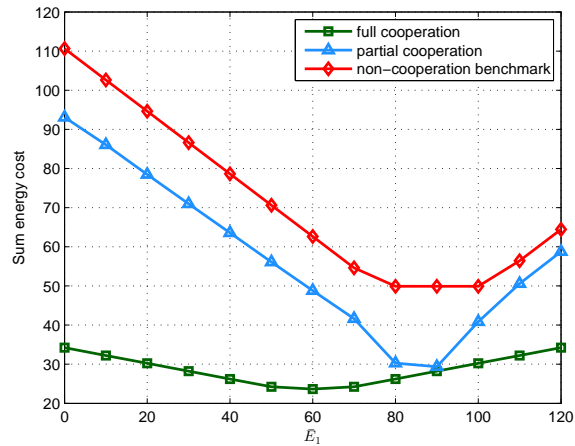}\\
  \caption{Comparison of the sum energy cost under different $\bar{E}_i$'s with $\bar{E}_1+\bar{E}_2=120$ W.}\label{compare}
\end{figure}

In Fig. \ref{delta}, we show the convergence of the partially cooperative distributed algorithm under step-sizes $\delta=0.05$ and $\delta=0.02$, and  $\rho=1.5$ is chosen to achieve the proportional fairness. It is observed that under different step-sizes, the costs at two BSs converge to different points on the Pareto boundary. Specifically, given $\delta = 0.05$,  the cost reductions at BS 1 and BS 2 are observed to be 11.7606 (from 29.8092 to 18.0423) and 7.9825 (from 20.0860 to 12.1035), respectively, with the cost reduction ratio being 11.7606/7.9825=1.4733;  while given $\delta = 0.02$,  the cost reductions at BS 1 and BS 2 are observed to be 11.7968(from 29.8092 to 18.0124) and 7.9625 (from 20.0860 to 12.1235), respectively, with the cost reduction ratio being 11.7968/7.9625=1.4815. By comparing the cost reduction ratios in two cases with $\rho = 1.5$, it is inferred that the proportional fairness can be better guaranteed with smaller $\delta$. This also validates that $\delta$ should be sufficiently small to ensure the proportionally cost reduction in each iteration step (see Section V).  It is also observed that the algorithm converges after about 40 iterations for $\delta=0.05$. This indicates that under proper choice of $\delta$, the convergence speed is very fast provided that certain proportional fairness inaccuracies are admitted.

In Fig. \ref{compare}, we compare the achieved total costs of two BSs by different schemes (i.e., $C_1+C_2$ with weights $\gamma_1 = \gamma_2 = 1$) versus  the renewable energy level at BS 1 (i.e., $\bar E_1$) subject to  $\bar E_1 + \bar E_2 = 120$ W. From this figure, it is observed that the fully cooperative scenario outperforms both the non-cooperative benchmark and the partially cooperative scenario, especially when the available renewable energy amounts at two BSs are not even (e.g., $\bar E_1 = 0$ W and $\bar E_2 = 120$ W, as well as, $\bar E_1  = 120$ W and $\bar E_2 = 0$ W). This is intuitive, since in this case, the partially cooperative systems may have limited incentives for cooperation, while the fully cooperative systems can implement uni-directional energy and spectrum cooperation to reduce the total cost. It is also observed that the performance gap between fully and partially cooperative scenario becomes smallest when $\bar E_1 = 90$ W. This is because the partially cooperative scenario can find the best complementarity to bidirectionally exchange the two resources as in full cooperation scenario. More specifically, in this case, the cost reduction ratio between two BSs for the fully cooperative scheme  is most close to the cost ratio between two BSs in the non-cooperative benchmark.  Hence,  full cooperation in this case results in an inter-system energy and spectrum cooperation scheme that is consistent with the proportionally fair criterion in partial cooperation (see Definition \ref{definition}).

\begin{figure*}[t]
        \centering
        \begin{subfigure}{0.5\textwidth}
                \centering
                \includegraphics[width=8cm]{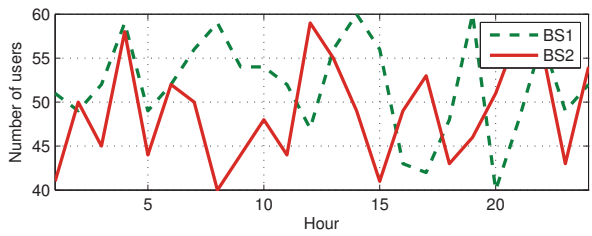}
                \caption{Number of users $K_1$ and $K_2$.}
             \label{bs1_final}
        \end{subfigure}%
        \begin{subfigure}{0.5\textwidth}
                \centering
                \includegraphics[width=8cm]{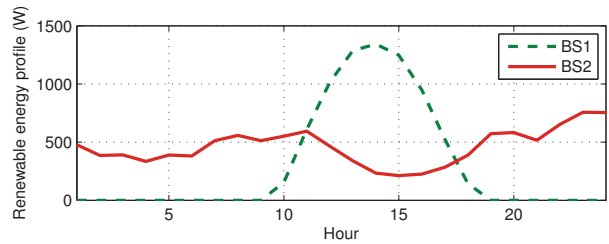}
                \caption{Renewable energy profile $\bar E_1$ and $\bar E_2$.}
                \label{bs2_final}
        \end{subfigure}
  \caption{Traffic and harvested renewable energy profile at the two systems for simulation.}\label{wind_solar_energy_compare}
\end{figure*}
\begin{figure*}[t]
        \centering
        \begin{subfigure}{0.5\textwidth}
                \centering
                \includegraphics[width=8cm]{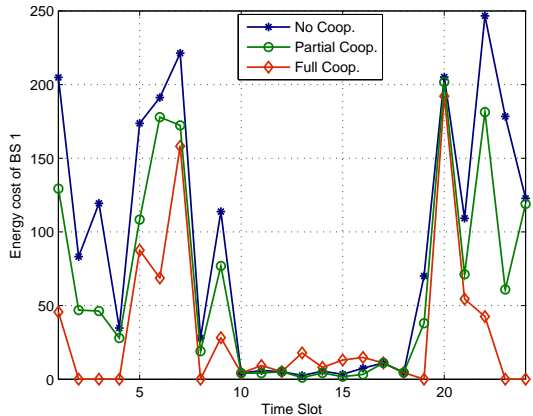}
               \caption{Energy cost of BS 1}
             \label{bs1_cost_final}
        \end{subfigure}%
        \begin{subfigure}{0.5\textwidth}
                \centering
                \includegraphics[width=8cm]{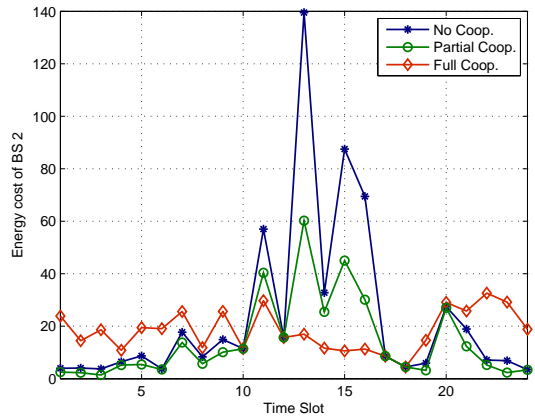}
                \caption{Energy cost of BS 2}
                \label{bs2_cost_final}
        \end{subfigure}
        \caption{Energy costs of the two BSs under full, partial and no cooperation.}\label{fig:final}
\end{figure*}

{
Finally, we show the optimized costs at both BSs over time by considering stochastically varying traffic load and harvested renewable energy. We assume that BS 1 and BS 2 are powered by solar and wind energy, respectively, and their energy harvesting rates are based on the  real-world solar and wind data from Elia, a Belgium electricity transmission system operator.\footnote{See http://www.elia.be/en/grid-data/power-generation/} For demonstration, we use the average harvested energy over one hour as $\bar E_i$ at each slot, as shown in Fig. \ref{bs1_final}, and thus our studied 24 slots correspond to the energy harvesting profile over one day. Furthermore, we consider the number of MTs served by each BS, $K_i$'s, over slots as shown in Fig. \ref{bs2_final}, which are randomly generated based on a discrete uniform distribution over the interval [40, 60] for simplification. Under this setup, Fig. \ref{fig:final} shows the optimized costs of the two BSs by different schemes. It is observed that over the 24 slots the full and partial cooperation achieve 55.68\% and 33.75\% total cost reduction for the two BSs as compared to the no cooperation benchmark, respectively. It is also observed that for partial cooperation, the energy costs of both BSs are reduced at the same time, while for full cooperation, at certain slots, i.e., slots 22,  the cost of one BS is reduced significantly at the expense of the cost increase of the other BSs. The reason is as follows. In partial cooperation, the two systems seek for mutual benefits and thus only bidirectional cooperation is feasible. For instance, at time slot  6, BS 1 shares bandwidth to BS 2 and BS 2 transfers energy to BS 1.  In contrast, under full cooperation, it is not required that the costs of both BSs be reduced at the same time and the common goal is to reduce the sum energy costs. Hence, in this case, it is possible that uni-directional cooperation happens where one BS sacrifices its interest to another one in order to reduce the sum energy cost.}

\section{Conclusions}
In this paper, we propose a joint energy and spectrum cooperation approach to reduce the energy costs at two wireless cellular systems that are powered by both energy harvesting and power grid. We minimize the costs at both systems by considering two scenarios where the wireless systems belong to the same entity and different entities, respectively. In the former case with full cooperation, we propose an optimal centralized algorithm for achieving the minimum weighted sum energy cost at two BSs. In the latter case with partial cooperation, we develop a distributed algorithm to achieve the Pareto optimal energy costs with proportional fairness. Our results provide insights on the design of  cooperative cellular systems with both energy and spectrum cooperation. Nevertheless, due to the space limitation, there are still several important issues on joint energy and spectrum cooperation that are not addressed in this paper, some of which are briefly discussed as follows to motivate future work:
\begin{itemize}
  \item { In this paper, we consider energy cooperation without storage at BSs. However, when energy storage is implemented in the systems, the energy management of the BS can have more flexibility such that  the energy supply variations in time can be mitigated and energy cooperation between the two systems can more efficiently exploit the geographical energy diversity. For example, at any given slot, a BS with sufficient renewable energy can either share them to other BSs with insufficient renewable energy, or store them for future use.  However, under the setup with storage, the joint optimization of the energy and spectrum cooperation over space and the storage management over time requires the prediction of the energy price, renewable energy availability and the user traffic in the future. The design will be a stochastic dynamic programming problem, whose optimal solution is still unknown and worth pursuing.}
  \item In this work, we consider that the two systems operate over orthogonal frequency bands. In general, allowing MTs associated with different BSs to share the same frequency band may further improve the spectrum efficiency. However, this formulation will  turn the problem into a challenging one related to the interference channel. The optimal resource allocation scheme in this case is unknown and difficult to solve.
   \item  We  have discussed the joint energy and spectrum sharing for energy saving under the setup of hybrid energy supply. In addition to this, another interesting work direction can be the maximization of QoS performance with energy cooperation {subject to the resource (i.e., spectrum and power) constraints. Depending on different application scenarios, the QoS metrics can be delay \cite{TianZhang2013}, throughput \cite{HoZhang2012}, and etc.} The details on the optimal strategies in these scenarios can be modeled and solved similarly as in this paper.
\end{itemize}

\appendices
\section{Proof of Proposition \ref{prop_nocoop}}\label{no_coop_prop}

First, we obtain the optimal bandwidth allocation $\{b_k^{\star}\}$  and optimal power allocation $\{p_k^{\star}\}$. Given that the non-transmission power at BS $i$, $P_{c,i}$, is constant, it can be shown that the objective of (P2), i.e., the cost at BS $i$, is a monotonically { increasing} function of the sum transmission power $\sum_{k\in\mathcal{K}_i}p_k $ at BS $i$, no matter the power is purchased from the conventional grid or the renewable utility firm.
 Thus, deriving the optimal $\{b_k^{\star}\}$ and $\{p_k^{\star}\}$ to (P2) is equivalent to minimizing $\sum_{k\in\mathcal{K}_i}p_k $ at BS $i$ subject to the bandwidth constraint in (\ref{p1cons3:nocoop}) and the QoS constraints in (\ref{p1cons4:nocoop}). Using this argument together with (\ref{energy:nocoop}), we can obtain the optimal bandwidth allocation for (P2) by solving the following problem:
\begin{align}
  \mathop{\mathtt{min.}}_{\{b_k\ge0\}} &
~~ \sum_{k\in\mathcal{K}_i}\frac{b_kN_0}{g_k}\bigg(2^{\frac{r_k}{b_k}}-1\bigg) \nonumber\\
\mathtt{s.t.} & ~~ \sum_{k\in\mathcal{K}_i}b_k\leq W_i. \label{simplifiednocoop}
\end{align}

Since problem (\ref{simplifiednocoop}) is convex and satisfies the Slater's condition \cite{Boyd04}, the KKT conditions given as follows are necessary and sufficient for its optimal solution.

\begin{align}
 &\frac{N_0}{g_k}\bigg(2^{\frac{r_k}{b_k}}-1\bigg)-\frac{N_0r_k}{g_kb_k}\ln2\cdot2^{\frac{r_k}{b_k}}+\nu_i-\zeta_k=0
  ,~\forall k\in\mathcal{K}_i,\label{kkt1}\\
 & \nu_i(\sum_{k\in\mathcal{K}_i}b_k-W_i)=0,~
  \nu_i\ge0,\label{kkt2}\\
 & \zeta_kb_k=0,~\zeta_k\geq0,~b_k\geq 0,~\forall k\in\mathcal K_i, \label{kkt3}
\end{align}

{where $\nu_i\geq0$ is the dual variable associated with the bandwidth constraint in (\ref{simplifiednocoop}) and $\zeta_k\geq 0$ is the dual variable for $b_k\ge0,~k\in\mathcal{K}_i$. Note that  the optimal bandwidth allocation $\{b_k^{\star}\}$ should satisfy that $b_k^{\star} > 0$, $\forall k\in\mathcal K_i$, otherwise the objective value in (\ref{simplifiednocoop}) will go to infinity, given the fact that $r_k > 0, \forall k \in \mathcal K_i$. By using this together with (\ref{kkt3}), we thus have $\zeta_k^{\star} = 0,\forall k\in\mathcal K_i$. Accordingly, it follows from (\ref{kkt1}) that ${b_k^{\star}}$ can be obtained as in (\ref{optimal_spectrum}), where $\nu_i^{\star}>0$ is determined by the equation $\sum_{k\in\mathcal{K}_i}b_k^{\star}= W_i$. Furthermore, by substituting the derived $\{b_k^{\star}\}$ into (\ref{energy:nocoop}), the optimal $\{p_k^{\star}\}$ can be obtained.}

Next, with $\{p_k^{\star}\}$ at hand, we proceed to obtain the optimal energy allocation  $E_i^{\star}$ and $G_i^{\star}$. By using $\alpha_i^E < \alpha_i^G$ together with the fact that the optimal solution of (P2) is attained when the power constraint in (\ref{p1cons1:nocoop}) is tight, it can be verified that the optimal solutions of $E_i^{\star}$ and $G_i^{\star}$ are obtained as in Proposition
\ref{prop_nocoop}. Hence, the proof of Proposition \ref{prop_nocoop} is complete.

\section{Proof of Proposition \ref{optimalP1}}\label{proofoptimalP1}

By substituting (\ref{Lambertx:nocoop}) into the power constraint (\ref{P1:cons1}) in (P1), we can re-express (P1) as
{\small
\begin{align}
(\mathrm{P1.1}):\mathop{\mathtt{min.}}_{\mv{x}\ge \bf{0}}&~
 \sum_{i=1}^{2}\gamma_i(\alpha^{E}_iE_i+\alpha^{G}_iG_{i})\nonumber\\
\mathtt{s.t.}
&~  \sum_{k\in\mathcal{K}_i}\frac{b_kN_0}{g_k}\bigg(2^{\frac{r_k}{b_k}}-1\bigg)+P_{c,i}\nonumber\\
&~\le E_i+G_i+\beta_{E}e_{\bar{\imath}}-e_i,~i \in \{1,2\}\label{P1:cons1:re},\\
&~(\rm{\ref{P1:cons3}})~\mathrm{and}~(\rm{\ref{P1:cons2}}).\nonumber
\end{align}}

Denote the dual variables associated with the constraints in (\ref{P1:cons1:re}) and (\ref{P1:cons3}) as $\mu_i\ge 0$ and $\lambda_i\geq 0,~ i \in \{1,2\}$, respectively. The { partial} Lagrangian of (P1.1) is then expressed as:
\begin{align}\label{app:lagrangian}
  \mathcal{L}(\mv{x},\{\mu_i\},&\{\lambda_i\})
=\sum_{i=1}^{2}E_{i}(\gamma_i\alpha^{E}_i-\mu_i)+\sum_{i=1}^{2}G_{i}(\gamma_i\alpha^{G}_i-\mu_i)\nonumber\\
  &+\sum_{i=1}^{2}\lambda_i\sum_{k\in\mathcal{K}_i}^{2}b_{k}+\sum_{i=1}^{2}\mu_{i}
  \sum_{k\in\mathcal{K}_i}^{2}\frac{b_kN_0}{g_k}
  \bigg(2^{\frac{r_k}{b_k}}-1\bigg)\nonumber\\
  &-\sum_{i=1}^{2}\lambda_iW_i+\sum_{i=1}^{2}\mu_iP_{c,i}
  +\sum_{i=1}^{2}w_{i}(\lambda_i-\beta_B\lambda_{\bar{\imath}})\nonumber\\
  &+\sum_{i=1}^{2}e_{i}(\mu_i-\beta_{E}\mu_{\bar{\imath}}).
\end{align}
Accordingly, the dual function can be obtained as
\begin{align}
  g(\{\mu_i\},\{\lambda_i\})=\mathop{\mathtt{min.}}_{\mv{x}\geq0}&~~\mathcal{L}(\mv{x}
  ,\{\mu_i\},\{\lambda_i\})\label{lagrangian}\\
  \mathtt{s.t.} &~~(\rm{\ref{P1:cons2}})\nonumber.
\end{align}
 Thus, the dual problem is expressed as
\begin{align}
  \mathrm{(P1.1-D)}:~\mathop{\mathtt{max.}}_{\{\mu_i\},\{\lambda_i\}}&~~g(\{\mu_i\},\{\lambda_i \}) \nonumber\\
\mathtt{s.t.} &~~\lambda_i\geq0,\mu_i\geq0,~\forall i \in \{1,2\}.\nonumber
\end{align}
Since (P1.1) is convex and satisfies the Slater's condition, strong duality holds between (P1.1) and (P1.1-D) \cite{Boyd04}. Therefore, (P1.1) can be solved optimally  by solving its dual problem (P1.1-D) as follows. We first solve the problem in (\ref{lagrangian}) to obtain $g(\{\mu_i\},\{\lambda_i\})$ for given $\{\mu_i\}$ and $\{\lambda_i\}$, and  then maximize $g(\{\mu_i\},\{\lambda_i\})$ over $\{\mu_i\}$ and $\{\lambda_i\}$.

We first give the following lemma.
\begin{lemma}\label{Lemma}
In order for $g(\{\mu_i\},\{\lambda_i\})$ to be bounded from below, it follows that
\begin{align}\gamma_i\alpha^{G}_i\geq \mu_i, ~\beta_E\mu_{\bar{\imath}}\leq\mu_i,~\beta_B\lambda_{\bar{\imath}}\leq\lambda_{i},~ \forall i \in \{1,2\}.
\label{eqn:bounded:below}\end{align}
\end{lemma}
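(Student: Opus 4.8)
The plan is to exploit the fact that the partial Lagrangian in (\ref{app:lagrangian}) is \emph{separable} (additive) across the decision variables, so that the inner minimization defining $g(\{\mu_i\},\{\lambda_i\})$ in (\ref{lagrangian}) decomposes into independent scalar minimizations over $E_i$, $G_i$, $\{b_k\}$, $e_i$, and $w_i$. For $g$ to be bounded from below, each of these scalar minimizations must be bounded from below. The heart of the argument is therefore the standard observation that an affine term $c\cdot y$ minimized over a variable $y\ge 0$ with no upper bound is bounded from below if and only if its coefficient $c$ is non-negative; otherwise letting $y\to\infty$ drives $\mathcal{L}\to-\infty$.

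I would carry this out coefficient by coefficient. First, the variable $G_i$ enters (\ref{app:lagrangian}) only through the affine term $G_i(\gamma_i\alpha_i^G-\mu_i)$, and $G_i\ge 0$ is unconstrained from above (the grid supply is assumed unlimited). Hence boundedness forces $\gamma_i\alpha_i^G-\mu_i\ge 0$, i.e. $\gamma_i\alpha_i^G\ge\mu_i$. Second, $e_i\ge 0$ enters only through $e_i(\mu_i-\beta_E\mu_{\bar{\imath}})$ and is likewise unbounded above, so we need $\mu_i-\beta_E\mu_{\bar{\imath}}\ge 0$, i.e. $\beta_E\mu_{\bar{\imath}}\le\mu_i$. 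Third, $w_i\ge 0$ enters only through $w_i(\lambda_i-\beta_B\lambda_{\bar{\imath}})$, giving $\beta_B\lambda_{\bar{\imath}}\le\lambda_i$. Taking each over $i\in\{1,2\}$ yields exactly the three families of inequalities in (\ref{eqn:bounded:below}). Since each condition is individually necessary, their conjunction is necessary for $g$ to be bounded below.

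I would then verify that no \emph{other} variable contributes a further necessary condition, which is the only point requiring a little care. The variable $E_i$ carries the coefficient $\gamma_i\alpha_i^E-\mu_i$, but $E_i$ is confined to the compact interval $[0,\bar E_i]$ by constraint (\ref{P1:cons2}); its minimization is attained at an endpoint regardless of the sign of the coefficient, so it imposes nothing. The terms in $b_k$, namely $\lambda_i b_k+\mu_i\frac{b_kN_0}{g_k}(2^{r_k/b_k}-1)$, are convex in $b_k$ and, for the dual-feasible $\lambda_i,\mu_i\ge 0$, bounded from below over $b_k>0$: as $b_k\to 0^+$ the exponential penalty diverges to $+\infty$, while as $b_k\to\infty$ one has $b_k(2^{r_k/b_k}-1)\to r_k\ln 2$ so the expression grows like $\lambda_i b_k\ge 0$. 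Hence the $b_k$ sub-problems are automatically bounded and add no constraint.

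The main (and essentially the only) obstacle is the bookkeeping in the last step: one must confirm that the separability is genuine and that the $b_k$ terms cannot escape to $-\infty$, so that the three inequalities in (\ref{eqn:bounded:below}) are truly the complete set of conditions implied by boundedness rather than a proper subset. Once the separability and the limiting behaviour of the $b_k$ terms are checked, the necessity of (\ref{eqn:bounded:below}) follows immediately from the sign argument applied to the unbounded non-negative variables $G_i$, $e_i$, and $w_i$.
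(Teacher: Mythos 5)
Your proof is correct and follows essentially the same route as the paper: the paper likewise argues that if any of the coefficients $\gamma_i\alpha_i^G-\mu_i$, $\mu_i-\beta_E\mu_{\bar{\imath}}$, or $\lambda_i-\beta_B\lambda_{\bar{\imath}}$ were negative, then letting $G_i$, $e_i$, or $w_i$ tend to infinity would drive $g$ to $-\infty$. Your additional check that $E_i$ and the $b_k$ terms impose no further conditions is correct but not needed for the stated necessity claim (it matters only for the subsequent sufficiency/decomposition step, which the paper handles separately).
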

\begin{proof}
First, suppose that $\gamma_i\alpha^{G}_i< \mu_i$ for any $i \in \{1,2\}$.
In this case, it is easy to verify that the dual function $g(\{\mu_i\},\{\lambda_i\})$ will go to minus infinity as $G_i\rightarrow \infty$, i.e., $g(\{\mu_i\},\{\lambda_i\})$ is unbounded from below. Hence, $\gamma_i\alpha^{G}_i\geq \mu_i,~i \in \{1,2\},$  should always hold.

Second,  suppose that $\beta_E\mu_{\bar{\imath}}>\mu_i$ for any $i \in \{1,2\}$. In this case, it is easy to verify that the dual function $g(\{\mu_i\},\{\lambda_i\})$ will go to minus infinity as $e_i\rightarrow \infty$, i.e., $g(\{\mu_i\},\{\lambda_i\})$ is unbounded from below. Hence, $\beta_E\mu_{\bar{\imath}}\leq\mu_i,~i \in \{1,2\},$ should always hold.

Last, suppose that $\beta_B\lambda_{\bar{\imath}}>\lambda_{i}$ for any $i \in \{1,2\}$. In this case, it is easy to verify that the dual function $g(\{\mu_i\},\{\lambda_i\})$ will go to minus infinity as $w_i\rightarrow \infty$, i.e., $g(\{\mu_i\},\{\lambda_i\})$ is unbounded from below. Hence, $\beta_B\lambda_{\bar{\imath}}\leq\lambda_{i},~i \in \{1,2\},$ should always hold.

By combining the above three arguments, Lemma \ref{Lemma} is thus proved.
\end{proof}

From Lemma \ref{Lemma}, it follows that the optimal solution of (P1.1-D) is achieved when $\{\mu_{i}\}$ and $\{\lambda_i\}$ satisfies the inequalities in (\ref{eqn:bounded:below}). As a result, we only need to solve problem (\ref{lagrangian}) with given $\{\mu_{i}\}$ and $\{\lambda_i\}$ satisfying (\ref{eqn:bounded:below}). In this case, it can be observed that problem (\ref{lagrangian}) can be decomposed into the following subproblems:
\begin{align}
  &\mathop{\mathtt{min.}}_{b_{k}\ge0}~~
  \lambda_ib_{k}+\mu_{i}
  \bigg(\frac{b_kN_0}{g_k}\bigg(2^{\frac{r_k}{b_k}}-1\bigg)\bigg),~k\in\mathcal{K}_1\cup\mathcal{K}_2 \label{sub1},\\
  &\mathop{\mathtt{min.}}_{0\le E_i\le \bar{E}_i}~~E_{i}(\gamma_i\alpha^{E}_i-\mu_i),~i \in \{1,2\}\label{sub2},\\
 & \mathop{\mathtt{min.}}_{G_{i}\geq0}~~G_{i}(\gamma_i\alpha^{G}_i-\mu_i),~i \in \{1,2\}\label{sub3},\\
 & \mathop{\mathtt{min.}}_{e_{i}\geq0}~~e_{i}(\mu_i-\beta_{E}\mu_{\bar{\imath}}),~i \in \{1,2\}\label{sub4},\\
 & \mathop{\mathtt{min.}}_{w_{i}\geq0}~~w_{i}(\lambda_i-\beta_B\lambda_{\bar{\imath}}),~i \in \{1,2\}.\label{sub5}
\end{align}

For the $K_1+K_2$ subproblems in (\ref{sub1}), the optimal bandwidth allocation with given $\{\mu_i\}$ and $\{\lambda_i\}$ can be obtained based on the first order condition and is expressed as
\begin{align}
  b_{k}^{(\mu_i,\lambda_i)}=
    \begin{cases}
\frac{r_k\ln 2}{\mathcal{W}(\frac{1}{e}(\frac{\lambda_i g_k}{\mu_iN_0}-1))+1}, & \mu_i>0 \\
0, & \mu_i=0\
\end{cases},~k\in \mathcal{K}_{i}, ~i \in \{1,2\}.
  \label{result1}
\end{align}
 Furthermore, the optimal solution to the subproblems in (\ref{sub2})-(\ref{sub5}) can be obtained as follows.
\begin{align}
  E_i^{(\mu_i)}&=
  \begin{cases}
0, & \gamma_i\alpha^{E}_i\geq \mu_i \\
\bar{E}_i, & \gamma_i\alpha^{E}_i< \mu_i\label{result2}
\end{cases}, ~i \in \{1,2\},\\
G_i^{(\mu_i)}&=0, ~i \in \{1,2\}\label{result3},
\\
e_i^{(\mu_i)}&=0, ~i \in \{1,2\}\label{result4},\\
w_i^{(\lambda_i)}&=0, ~i \in \{1,2\}.\label{result5}
\end{align}
{ Note that for the subproblems in (\ref{sub2}) with any $i \in \{1,2\}$, if $\gamma_i\alpha^{E}_i=\mu_i$, then the solution of $E_{i}$ is non-unique and can be any value within its domain. For convenience, we choose $E_{i}^{(\mu_i)}=0$ in this case. The similar case holds for subproblems in (\ref{sub3}), (\ref{sub4}) and (\ref{sub5}) if $\beta_E\mu_{\bar{\imath}}=\mu_i$, $\gamma_i\alpha^{G}_i= \mu_i$ and $\beta_B\lambda_{\bar{\imath}}=\lambda_{i}$, respectively. In these cases, $G_{i}^{(\mu_i)}=0$, $e_{i}^{(\mu_i)}=0$ and $w_{i}^{(\mu_i)}=0, i \in \{1,2\}$, are chosen as the solution for simplicity.}
{  Also note that the solutions in (\ref{result1}) - (\ref{result5}) are only for obtaining the dual function $g(\{\mu_i\},\{\lambda_i\})$ under any given ${\mu_i}$ and ${\lambda_i}$ to solve the dual problem (P1.1-D), while they may not be the optimal solution to the original(primal) problem (P1) duo to their non-uniqueness.}

 With the results in (\ref{result1}) $-$ (\ref{result5}), we have obtained the dual function $g(\{\mu_i\},\{\lambda_i\})$ with given $\{\mu_i\}$ and $\{\lambda_i\}$ satisfying (\ref{eqn:bounded:below}). Next, we maximize $g(\{\mu_i\},\{\lambda_i\})$ over $\{\mu_i\}$ and $\{\lambda_i\}$ to solve (P1.1-D). Since $g(\{\mu_i\},\{\lambda_i\})$ is convex but in general not differentiable, subgradient based algorithms such as the ellipsoid method \cite{BoydConvOptII} can be applied to solve (P1.1-D), where the subgradients of $g(\{\mu_i\},\{\lambda_i\})$ for $\mu_i$ and $\lambda_i$ are given by  $-\sum_{k\in\mathcal{K}_i}\frac{b_k^{(\mu_i,\lambda_i)}N_0}{g_k}\bigg(2^{\frac{r_k}{b_k^{(\mu_i,\lambda_i)}}}-1\bigg)-P_{c,i} +E_i^{(\mu_i)}$ and $-\sum_{k\in \mathcal{K}_1\cup\mathcal{K}_2}b_k^{(\mu_i, \lambda_i)}+W_i,~i \in \{1,2\}$, respectively. As a result, we can obtain the optimal dual solution as $\{\mu_i^{{\star}}\}$ and $\{\lambda_i^{{\star}}\}$.  Accordingly, the corresponding $\{b_{k}^{(\mu_i^{\star},\lambda_i^{\star})}\}$ becomes the optimal bandwidth allocation solution for (P1.1) and thus (P1), given by $\{b_{k}^{{\star}}\}$. Substituting the obtained $\{b_{k}^{{\star}}\}$ into (\ref{energy:nocoop}), the optimal power allocation solution for (P1) is thus obtained as $\{p_{k}^{{\star}}\}$.

However, it is worth noting that the other optimal optimization variables for (P1), given by $\{E_i^{\star}\},\{G_i^{{\star}}\},\{e_i^{\star}\}$ and $\{w_i^{\star}\}$, cannot be directly obtained from (\ref{result2})$-$(\ref{result5}), since the solutions in (\ref{result2})$-$(\ref{result5}) are in general non-unique. Nevertheless, it can be shown that the optimal solution of (P1) is always attained when the inequality constraints in (\ref{P1:cons1}) and (\ref{P1:cons3}) are tight. Thus, we have
\begin{align}
\sum_{k\in\mathcal{K}_i}p_{k}^{\star}+P_{c,i}=& E_i^{\star}+G_i^{\star}+\beta_{E} e_{\bar{\imath}}^{\star}-e_{i}^{\star} ,~ i \in \{1,2\}\label{P1:cons1:solution},\\
\sum_{k\in\mathcal{K}_i}b_k^{\star}=& W_i+\beta_Bw_{\bar{\imath}}^{\star}-w_i^{\star} ,~ i \in \{1,2\}\label{P1:cons3:solution}.
\end{align}
From (\ref{P1:cons3:solution}) and using the fact that $w_1^{\star}$ and $w_2^{\star}$ should not be positive at the same time, the optimal solution of $\{w_i^{\star}\}$ can be obtained.
Last, the optimal optimization variables $\{E_i^{\star}\},\{G_i^{\star}\}$ and $\{e_i^{\star}\}$ can be obtained by solving the LP in (P3). Therefore, Proposition 4.1 is proved.

{ \section{Proof of Lemma \ref{proposition:5.1}}\label{appendix:proof_lemma_5.1}
We prove Lemma \ref{proposition:5.1} by considering two cases. First, consider $\mv x^{\mathrm{ ex}}$ with $\sum_{k\in\mathcal{K}_i}p^{(\bm{x}^{\mathrm{ex}})}_k+P_{c,i}-\beta_E e_{\bar{\imath}}+e_{i}\neq\bar{E}_i$. In this case, it is evident from (\ref{P4dual:nocoop}) and (\ref{waterlevel}) that the optimal solutions $\mu_i^{(\mv x^{\mathrm {ex}})}$ and $\lambda_i^{(\mv x^{\mathrm {ex}})}$ are both unique, since  the bandwidth water-level $\nu_i$ is always unique.  According to Theorem 1 in \cite{Horst1984}, the left-partial derivative is equal to the right-partial derivative. Hence,  $C_i(\mv x^{\mathrm{ ex}})$ is differentiable with the partial derivatives given in Lemma \ref{proposition:5.1}. Therefore, (\ref{1storder}) follows directly based on the first order approximation of $\bar{C}_i(\mv{x}^{\mathrm{ex}}+\Delta \mv{x}^{\mathrm{ex}})$.

Next, consider $\mv x^{\mathrm {ex}}$ with $\sum_{k\in\mathcal{K}_i}p^{(\bm{x}^{\mathrm{ex}})}_k+P_{c,i}-\beta_E e_{\bar{\imath}}+e_{i}=\bar{E}_i$. In this case, the optimal dual solution of $\mu_i^{\bm{x}^{\mathrm {ex}}}$ in (\ref{P4dual:nocoop}) is not unique. More specifically, it can be shown that $\mu_i^{\mathrm {ex}}$ can be any real number between $\alpha_i^E$ to $\alpha_i^G$. As  a result, $C_i(\mv x^{\mathrm {ex}})$ is not differentiable in such point. However, it follows from \cite{Horst1984} that  the left- and right-hand derivatives of $C_i(\mv x^{\mathrm {ex}})$ with respect to $e_1,~e_2,~w_1$ and $w_2$ still exist, which can be given as
\begin{align}
\frac{\partial \bar{C}_i(\mv{x}^{\mathrm{ex}})}{\partial e_i^+}=\alpha_i^{G}&,~\frac{\partial \bar{C}_i(\mv{x}^{\mathrm{ex}})}{\partial e_i^-}=\alpha_i^{E}\nonumber\\
\frac{\partial \bar{C}_i(\mv{x}^{\mathrm{ex}})}{\partial e_{\bar{\imath}}^+}=-\beta_E\alpha_i^{E}&,~\frac{\partial \bar{C}_i(\mv{x}^{\mathrm{ex}})}{\partial e_{\bar{\imath}}^-}=-\beta_E\alpha_i^{G}\nonumber\\
\frac{\partial \bar{C}_i(\mv{x}^{\mathrm{ex}})}{\partial w_i^+}=\alpha_i^{G}\nu_i^{(\bm{x}^{\mathrm{ex}})}&,~\frac{\partial \bar{C}_i(\mv{x}^{\mathrm{ex}})}{\partial w_i^-}=\alpha_i^{E}\nu_i^{(\bm{x}^{\mathrm{ex}})}\nonumber\\
\frac{\partial \bar{C}_i(\mv{x}^{\mathrm{ex}})}{\partial w_{\bar{\imath}}^+}=-\alpha_i^{E}\nu_i^{(\bm{x}^{\mathrm{ex}})}&,~\frac{\partial \bar{C}_i(\mv{x}^{\mathrm{ex}})}{\partial w_{\bar{\imath}}^-}=-\alpha_i^{G}\nu_i^{(\bm{x}^{\mathrm{ex}})}.\nonumber
\end{align}
 By replacing the partial derivatives in (\ref{partialderivatives}) as the corresponding left- or right-hand derivatives, (\ref{1storder}) also follows from the first order approximation of $\bar{C}_i(\mv{x}^{\mathrm{ex}}+\Delta \mv{x}^{\mathrm{ex}})$. It is worth noting that, in Lemma \ref{proposition:5.1}, we do not introduce the left- and right-hand derivatives for notational convenience.

Therefore, Lemma 5.1 is proved.}
\section{Proof of Proposition \ref{theorem1}}\label{appendix:proof2}

Since the shared energy from BS 1 to BS 2 and that from BS 2 to BS 1 cannot be zero at the same time, i.e.,  $e_1\cdot e_2=0$, there exist three possible cases for the shared energy between the two BSs, which are (a) $e_1=e_2=0$, (b) $e_1>0, e_2=0$, and (c) $e_1=0, e_2>0$. Similarly, there are three possible cases for the shared bandwidth between the two BSs, i.e., (a) $w_1=w_2=0$, (b) $w_1>0, w_2=0$, and (c) $w_1=0, w_2>0$. As a result, by combining the above energy and spectrum cooperation, there are nine cases for the shared energy and the shared bandwidth. Therefore, we prove this proposition by enumerating the  nine possible cases. In the following, we consider the case of $e_1=e_2=w_1=w_2=0$ and show that in this case, $\mv{x}^{\mathrm{ex}}$ attains the Pareto optimality if and only if $\lambda_1^{(\mv{x}^{\mathrm{ex}})}/\mu_1^{(\mv{x}^{\mathrm{ex}})}\leq\lambda_2^{(\mv{x}^{\mathrm{ex}})}/(\mu_2^{(\mv{x}^{\mathrm{ex}})}\beta_E)$ and $\lambda_2^{(\mv{x}^{\mathrm{ex}})}/\mu_2^{(\mv{x}^{\mathrm{ex}})}\leq\lambda_1^{(\mv{x}^{\mathrm{ex}})}/(\mu_1^{(\mv{x}^{\mathrm{ex}})}\beta_E)$. We prove the ``only if'' and  ``if'' parts, respectively.

First, we show the necessary part by contradiction. Suppose that there exists an inter-system energy and bandwidth cooperation vector $\bar{\mv{x}}^{\mathrm{ex}}$ with $ \bar e_1=\bar e_2=\bar w_1=\bar w_2=0$ attains the Pareto optimality, where $\lambda_1^{(\bar {\mv{x}}^{\mathrm{ex}})}/\mu_1^{(\bar{\mv{x}}^{\mathrm{ex}})}>\lambda_2^{(\bar{\mv{x}}^{\mathrm{ex}})}/(\mu_2^{(\bar{\mv{x}}^{\mathrm{ex}})}\beta_E)$ or $\lambda_2^{(\bar{\mv{x}}^{\mathrm{ex}})}/\mu_2^{(\bar{\mv{x}}^{\mathrm{ex}})}>\lambda_1^{(\bar{\mv{x}}^{\mathrm{ex}})}/(\mu_1^{(\bar{\mv{x}}^{\mathrm{ex}})}\beta_E)$. If $\lambda_1^{(\bar {\mv{x}}^{\mathrm{ex}})}/\mu_1^{(\bar{\mv{x}}^{\mathrm{ex}})}>\lambda_2^{(\bar{\mv{x}}^{\mathrm{ex}})}/(\mu_2^{(\bar{\mv{x}}^{\mathrm{ex}})}\beta_E)$, then we can construct a new inter-system energy and bandwidth cooperation vector as
\begin{align}\label{neq:x_ex}
\tilde{\mv{x}}^{\mathrm{ex}} = \bar{\mv{x}}^{\mathrm{ex}} + \Delta {\mv{x}}^{\mathrm{ex}},
\end{align}
with $\Delta{\mv{x}}^{\mathrm{ex}} = (\Delta e_1, \Delta e_2, \Delta w_1, \Delta w_2)^T$, where $\Delta e_2= \Delta w_1=0$, while $\Delta e_1 > 0$ and $\Delta w_2>0$ are sufficiently small and satisfy that $\mu_1^{(\bar{\mv{x}}^{\mathrm{ex}})}/\lambda_1^{(\bar{\mv{x}}^{\mathrm{ex}})}<\Delta w_2/\Delta e_1<\beta_E\mu_2^{(\bar{\mv{x}}^{\mathrm{ex}})}/\lambda_2^{(\bar{\mv{x}}^{\mathrm{ex}})}.$ In this case, it can be shown from (\ref{1storder}) that
\begin{align}\label{eqn:76}
    \left[
      \begin{array}{c}
        \Delta C_1 \\
        \Delta C_2 \\
      \end{array}
    \right]&=\left[
      \begin{array}{c}
        \bar{C}_1(\tilde{\mv{x}}^{\mathrm{ex}})-\bar{C}_1(\bar{\mv{x}}^{\mathrm{ex}}) \\
        \bar{C}_2(\tilde{\mv{x}}^{\mathrm{ex}})-\bar{C}_2(\bar{\mv{x}}^{\mathrm{ex}}) \\
      \end{array}
    \right]\nonumber\\
&=
\left[
\begin{array}{c}
\mu_1^{(\bar{\mv{x}}^{\mathrm{ex}})}\Delta e_1-\lambda_1^{(\bar{\mv{x}}^{\mathrm{ex}})}\Delta w_2  \\
-\beta_E\mu_2^{(\bar{\mv{x}}^{\mathrm{ex}})}\Delta e_1+\lambda_2^{(\bar{\mv{x}}^{\mathrm{ex}})}\Delta w_2
\end{array}
\right] < \mv{0},
\end{align}
where the inequality is component-wise. In other words, we have found a new inter-system energy and bandwidth cooperation vector to achieve lower energy costs for both BSs. As a result, $\bar{\mv{x}}^{\mathrm{ex}}$ with $ \bar e_1=\bar e_2=\bar w_1=\bar w_2=0$ does not achieve the Pareto optimality. On the other hand, if $\lambda_2^{(\bar{\mv{x}}^{\mathrm{ex}})}/\mu_2^{(\bar{\mv{x}}^{\mathrm{ex}})}>\lambda_1^{(\bar{\mv{x}}^{\mathrm{ex}})}/(\mu_1^{(\bar{\mv{x}}^{\mathrm{ex}})}\beta_E)$, then we can also construct  a new inter-system energy and bandwidth cooperation vector as in (\ref{neq:x_ex}), where $\Delta e_1= \Delta w_2=0$, while $\Delta e_2 > 0$ and $\Delta w_1>0$ are sufficiently small and satisfy that $\mu_2^{(\bar{\mv{x}}^{\mathrm{ex}})}/\lambda_2^{(\bar{\mv{x}}^{\mathrm{ex}})}<\Delta w_1/\Delta e_2<\beta_E\mu_1^{(\bar{\mv{x}}^{\mathrm{ex}})}/\lambda_1^{(\bar{\mv{x}}^{\mathrm{ex}})}.$ Then, it can be shown that under this choice, the energy costs of the two BSs can be decreased at the same time. By combining the results for the two cases of $\lambda_1^{(\bar {\mv{x}}^{\mathrm{ex}})}/\mu_1^{(\bar{\mv{x}}^{\mathrm{ex}})}>\lambda_2^{(\bar{\mv{x}}^{\mathrm{ex}})}/(\mu_2^{(\bar{\mv{x}}^{\mathrm{ex}})}\beta_E)$ and $\lambda_2^{(\bar{\mv{x}}^{\mathrm{ex}})}/\mu_2^{(\bar{\mv{x}}^{\mathrm{ex}})}>\lambda_1^{(\bar{\mv{x}}^{\mathrm{ex}})}/(\mu_1^{(\bar{\mv{x}}^{\mathrm{ex}})}\beta_E)$, a contradiction is induced. As a result, the presumption cannot be true. Accordingly, the necessary part is proved.

Second, for the sufficient part, we can also show it by contradiction. Suppose that for an inter-system energy and bandwidth cooperation vector $\bar{\mv{x}}^{\mathrm{ex}}$ with $ \bar e_1=\bar e_2=\bar w_1=\bar w_2=0$ satisfying $\lambda_1^{(\bar{\mv{x}}^{\mathrm{ex}})}/\mu_1^{(\bar{\mv{x}}^{\mathrm{ex}})}\leq\lambda_2^{(\bar{\mv{x}}^{\mathrm{ex}})}/(\mu_2^{(\bar{\mv{x}}^{\mathrm{ex}})}\beta_E)$ and $\lambda_2^{(\bar{\mv{x}}^{\mathrm{ex}})}/\mu_2^{(\bar{\mv{x}}^{\mathrm{ex}})}\leq\lambda_1^{(\bar{\mv{x}}^{\mathrm{ex}})}/(\mu_1^{(\bar{\mv{x}}^{\mathrm{ex}})}\beta_E)$, but does not achieve the Pareto optimality. This case implies that the two BSs can exchange energy and bandwidth to decrease energy cost of both at the same time. In other words, there must exist a new vector $\tilde{\mv{x}}^{\mathrm{ex}}$ given in (\ref {neq:x_ex}) satisfying either $\Delta e_2= \Delta w_1=0$, $\Delta e_1 > 0,\Delta w_2>0$ or $\Delta e_1= \Delta w_2=0$, $\Delta e_2 > 0,\Delta w_1>0$ such that $\bar C_1(\tilde{\mv{ x}}^{\mathrm{ex}}) <  \bar C_1(\bar{\mv {x}}^{\mathrm{ex}})$ and $\bar C_2(\tilde {\mv{x}}^{\mathrm{ex}}) <  \bar C_2(\bar {\mv{x}}^{\mathrm{ex}})$. If $\Delta e_2= \Delta w_1=0$, $\Delta e_1 > 0,\Delta w_2>0$, then it can be shown from (\ref{eqn:76})
that $\lambda_1^{(\bar{\mv{x}}^{\mathrm{ex}})}/\mu_1^{(\bar{\mv{x}}^{\mathrm{ex}})}<\lambda_2^{(\bar{\mv{x}}^{\mathrm{ex}})}/(\mu_2^{(\bar{\mv{x}}^{\mathrm{ex}})}\beta_E)$ must hold, whereas if $\Delta e_1= \Delta w_2=0$, $\Delta e_2 > 0,\Delta w_1>0$, then we have $\lambda_2^{(\bar{\mv{x}}^{\mathrm{ex}})}/\mu_2^{(\bar{\mv{x}}^{\mathrm{ex}})}<\lambda_1^{(\bar{\mv{x}}^{\mathrm{ex}})}/(\mu_1^{(\bar{\mv{x}}^{\mathrm{ex}})}\beta_E)$. As a result, we have a contradiction here and thus the presumption cannot be true. Therefore, the sufficient  part is proved.

By combing the two parts, we have verified the proposition in the case of $e_1=e_2=w_1=w_2=0$.

Next,  the other eight cases remain to be proved in order to complete the proof of this proposition. Since the proof for these cases can follow the same contradiction  procedure as the case of $e_1=e_2=w_1=w_2=0$, we omit the details here for brevity. By combining the proof for the nine cases, this proposition is verified.

\bibliographystyle{IEEEtran}
\bibliography{Yinghao_OQE_ref}

\end{document}